\theoremstyle{theorem}
\newtheorem{thm}{Theorem}
\newtheorem{lem}[thm]{Lemma}
\newtheorem{cor}[thm]{Corollary}
\theoremstyle{definition}
\numberwithin{equation}{section}
\numberwithin{defn}{section}
\numberwithin{thm}{section}
\DeclareDocumentCommand \vb 		{ m }{ \bm{\mathbf{\lowercase{ #1 }}} }
\DeclareDocumentCommand \mb 		{ m }{ \bm{\mathbf{\uppercase{ #1 }}} }
\DeclareDocumentCommand \integral 	{ O{} O{} }{ \int\limits_{#1}^{#2}\! }
\DeclareDocumentCommand \d 			{ O{x} }{\,\mathrm{d}#1}
\DeclareDocumentCommand \T 			{ }{ ^{\top} }
\DeclareDocumentCommand \I 			{ }{ ^{-1} }
\DeclareDocumentCommand \diag 		{ m }{ \mathrm{diag}\pqty{\, #1 \,} }
\DeclareDocumentCommand \eye 		{ }{ \mb{I} }
\DeclareDocumentCommand \zeroVec	{  }{ \vb{0} }
\DeclareDocumentCommand \N 			{ s O{\vb{x}} O{\vb{0}} O{\eye} }
{
	\IfBooleanTF{#1}{
		#2\sim\mathrm{N}\pqty{ #3,\, #4 }
	}{
		\mathrm{N}\pqty{\left. #2 \,\middle\vert\, #3,\, #4 \right.}
	}
}
\DeclareDocumentCommand \St 	{ O{\vb{x}} O{\vb{0}} O{\eye} O{\nu} }{ \mathrm{St}\pqty{#1 \mid #2,\, #3, #4} }
\DeclareDocumentCommand \G 		{ O{\vb{x}} O{k} O{\theta} }{ \mathrm{G}\pqty{#1 \mid #2,\, #3} }
\DeclareDocumentCommand \IG 	{ O{\vb{x}} O{k} O{\theta} }{ \mathrm{IG}\pqty{#1 \mid #2,\, #3} }
\DeclareDocumentCommand \E 		{ O{} l m }{ \mathbb{E}_{#1}\bqty#2{ #3 } }
\DeclareDocumentCommand \V 		{ O{} l m }{ \mathbb{V}_{#1}\bqty#2{ #3 } }
\DeclareDocumentCommand \Cov 	{ O{} l m }{ \mathbb{C}_{#1}\bqty#2{ #3 } }
\DeclareDocumentCommand \nlf 		{ s }{ \IfBooleanTF{#1}{g}{\vb{g}} }
\DeclareDocumentCommand \dynf 		{  }{ \vb{f} }
\DeclareDocumentCommand \obsf 		{  }{ \vb{h} }
\DeclareDocumentCommand \stVar 		{  }{ \vb{x} }
\DeclareDocumentCommand \obsVar		{  }{ \vb{z} }
\DeclareDocumentCommand \stNoise	{ s }{ \IfBooleanTF{#1}{q}{\vb{q}} }
\DeclareDocumentCommand \obsNoise	{ s }{ \IfBooleanTF{#1}{r}{\vb{r}} }
\DeclareDocumentCommand \stNoiseCov	{  }{ \mb{Q} }
\DeclareDocumentCommand \obsNoiseCov{  }{ \mb{R} }
\DeclareDocumentCommand \stMean		{ s }{ \vb{m}^x }
\DeclareDocumentCommand \obsMean	{ s }{ \vb{m}^z }
\DeclareDocumentCommand \stCov		{ s }{ \mb{P}^x }
\DeclareDocumentCommand \stObsCov	{ s }{ \mb{P}^{xz}}
\DeclareDocumentCommand \obsCov		{ s }{ \mb{P}^z }
\DeclareDocumentCommand \inVarUnt	{  }{ \vb{\xi} }
\DeclareDocumentCommand \inVar 		{  }{ {\vb{x}} }
\DeclareDocumentCommand \inMean 	{  }{ \vb{m} }
\DeclareDocumentCommand \inCov 		{  }{ \mb{P} }
\DeclareDocumentCommand \inCovFct	{  }{ \mb{L} }
\DeclareDocumentCommand \inDim 		{  }{ D }
\DeclareDocumentCommand \outVar 	{ s }{ \IfBooleanTF{#1}{y}{\vb{y}} }
\DeclareDocumentCommand \outMean 	{  }{ \vb{\mu} }
\DeclareDocumentCommand \outCov 	{  }{ \mb{\Pi} }
\DeclareDocumentCommand \outDim 	{  }{ E }
\DeclareDocumentCommand \inoutCov 	{  }{ \mb{C} }
\DeclareDocumentCommand \outMeanApp	{  }{ \vb{\mu}_{\mathrm{G}} }
\DeclareDocumentCommand \outCovApp 	{  }{ \mb{\Pi}_{\mathrm{G}} }
\DeclareDocumentCommand \inoutCovApp{  }{ \mb{C}_{\mathrm{G}} }
\DeclareDocumentCommand \R 			{  }{ \mathbb{R} }
\DeclareDocumentCommand \D 			{  }{ \mathcal{D} }
\DeclareDocumentCommand \wm 		{  }{ \vb{w} }
\DeclareDocumentCommand \wc 		{  }{ \mb{W} }
\DeclareDocumentCommand \wcc 		{  }{ \mb{W}_c }
\DeclareDocumentCommand \gpExpVar	{  }{ \sigma^2 }
\DeclareDocumentCommand \gpObs		{  }{ \vb{y} }
\DeclareDocumentCommand \kerMean	{  }{ \vb{q} }
\DeclareDocumentCommand \kerCov		{  }{ \mb{Q} }
\DeclareDocumentCommand \kerCCov	{  }{ \mb{R} }
\DeclareDocumentCommand \kerMat		{  }{ \mb{K} }
\DeclareDocumentCommand \kerf		{  }{ k }
\DeclareDocumentCommand \rbfLam		{  }{ \mb{\Lambda} }
\DeclareDocumentCommand \rbfScale   {  }{ \alpha }
\DeclareDocumentCommand \trNum		{  }{ N }
\title{Gaussian Process Quadrature Moment Transform}
\author{Jakub Prüher, Ondřej Straka}
\begin{document}
\maketitle

\begin{abstract}
	Computation of moments of transformed random variables is a problem appearing in many engineering applications. 
	The current methods for moment transformation are mostly based on the classical quadrature rules which cannot account for the approximation errors.
	Our aim is to design a method for moment transformation for Gaussian random variables which accounts for the error in the numerically computed mean. 
	We employ an instance of Bayesian quadrature, called Gaussian process quadrature (GPQ), which allows us to treat the integral itself as a random variable, where the integral variance informs about the incurred integration error.
	Experiments on the coordinate transformation and nonlinear filtering examples show that the proposed GPQ moment transform performs better than the classical transforms.
\end{abstract}

\section{Introduction}\label{sec:intro}
The goal of a moment transformation is to compute statistical moments of random variables transformed through an arbitrary nonlinear function.
The moment transforms find their uses in such problems as sensor system design \citep{Zangl2008}, optimal control \citep{Heine2006,Ross2015} and they are also an indispensable part of local nonlinear filters and smoothers \citep{Sandblom2012,Dunik2013,Saerkkae2016,Wu2006,Tronarp2016}. 
These algorithms estimate a state of dynamical systems based on noisy measurements and are applied in solving a broad array of engineering problems such as aircraft guidance \citep{Smith1962}, GPS navigation \citep{Grewal2007}, weather forecasting \citep{Gillijns2006}, telecommunications \citep{Jiang2003} and finance \citep{Bhar2010} to name a few.

Recursive nonlinear filters can be divided into two categories: global and local.
Particle filters are typical representatives of the global filters, which are characterized by weaker assumptions and a higher computational demand. 
On the contrary, the local filters trade off more limiting assumptions for computational simplicity.
For tractability reasons, the local filters often leverage the joint Gaussianity assumption of state and measurement.
The main problem then lies in computation of transformed means and covariances, which are subsequently combined with a measurement in an update rule to produce the filtering state estimate.
Examples of well-known local nonlinear filters include the unscented Kalman filter (UKF) \citep{Julier2000}, the cubature Kalman filter (CKF) \citep{Arasaratnam2009} and the Gauss-Hermite Kalman filter (GHKF) \citep{Wu2006}, which are collectively known as \emph{sigma-point} filters, and which are characterized by their reliance on the classical numerical integration schemes.

A limitation of classical integral approximations, such as the Gauss-Hermite quadrature, is that they are specifically designed to achieve zero error on a narrow class of functions (typically polynomials up to a given degree). 
Since many nonlinearities encountered in practical problems do not fall into this category, the integrals are, more often than not, approximated with errors which go unaccounted for.
Even though error estimates for the classical rules exist, their computation is tedious in practice as they require higher order derivatives \citep{Gautschi2004}.

In recent years, the Bayesian quadrature (BQ) has been gaining attention as an exciting alternative for approximate evaluation of integrals. 
According to \citet{Diaconis1988}, the origins of this method lie as far as Poincaré's publication \citep{Poincare1896} from 1896. 
Later developments came with O'Hagan's work on Bayes-Hermite quadrature \citet{OHagan1991} and the Bayesian Monte Carlo \citep{Rasmussen2003a}.
Seeing BQ as an instance of a probabilistic approach to numerical computing spawned an emerging field of probabilistic numerics with its share of contributions \citep{Osborne2012,Briol2015,Oates2015}.
Contrary to the classical rules, the BQ minimizes an average error on a wider class of functions \citep{Minka2000,Briol2015}.
The numerical integration process is treated as a problem of Bayesian inference, where, in accordance with the Bayesian paradigm, an integral prior is transformed into a posterior by conditioning on obtained evaluations of the integrated function.
The result of integration is much more informative, because it is no longer a single value, but an entire distribution.
The posterior mean estimates the integral value, whereas the posterior variance can be construed as a model of the integration error. 
The BQ approach is uniquely suited to the purpose of our article, which is accounting for the integration errors in the moment transform process. 

Application of the BQ in nonlinear sigma-point filtering was first investigated by \citet{Sarkka2014}, where the authors elucidate connections between the BQ and the classical rules, but their algorithms do not make use of the integral variance.
When our GPQ moment transform is applied in the nonlinear sigma-point filtering, it leads to a similar approach to the previously proposed Gaussian process assumed density filter (GP-ADF) \citep{Deisenroth2009} and the Gaussian process unscented Kalman filter (GP-UKF) \citep{Ko2007}.
Both GP-ADF and GP-UKF use the GP models for system identification that takes place prior to running the filters. 
In our case, however, the crucial difference lies in the fact that the resulting GPQ filters do not require a system identification phase.

In our previous contribution \citep{Prueher2016}, we showed how to incorporate integral variance into a nonlinear sigma-point filtering algorithm.
This article crystallizes the results of the previous publication into a widely applicable general GPQ moment transform. 
Applications in sigma-point filtering are enriched with a target tracking example and an additional numerical experiments on common nonlinear coordinate transformation are provided.
We also further analyse properties of the proposed transform and give a theoretical proof for positive semi-definiteness of the resulting covariance matrix.

The rest of this article is organized as follows.
\Cref{sec:problem} introduces the general problem of the classical quadrature based moment transforms. 
\Cref{sec:gpq_transform} outlines the Gaussian process quadrature (GPQ). 
In \Cref{sec:moment_transform}, we describe the proposed moment transform based on the GPQ. 
Numerical experiments and performance evaluations are in \Cref{sec:experiments}.
Finally, \Cref{sec:conclusion} concludes the article.

\section{Problem Statement}\label{sec:problem}
Consider an input Gaussian random variable \( \inVar \in \R^\inDim \) with mean \( \inMean \) and covariance \( \inCov \) transformed through a nonlinear vector function
\begin{equation}\label{eq:nonlinearity}
	\outVar = \nlf(\inVar) \qquad \N*[\inVar][\inMean][\inCov]
\end{equation}
producing an output \( \outVar \in \R^\outDim \).
Even though the output is not Gaussian (due to the nonlinearity of \( \nlf \)), we approximate both variables as jointly Gaussian distributed, so that
\begin{equation}\label{eq:gaussian_joint_xy}
	\N*[\bmqty{\inVar \\ \outVar}][\bmqty{\inMean \\ \outMean}][\bmqty{\inCov & \inoutCov \\ \inoutCov\T & \outCov}],
\end{equation}
where the transformed moments are given by the following Gaussian weighted integrals
\begin{align}
	\outMean 	= \E[\inVar]{\nlf(\inVar)} &= \integral \nlf(\inVar)\N[\inVar][\inMean][\inCov] \d[\inVar], \label{eq:outMean_integral} \\
	\outCov  	= \Cov[\inVar]{\nlf(\inVar)} &= \integral (\nlf(\inVar) - \outMean)(\nlf(\inVar) - \outMean)\T \N[\inVar][\inMean][\inCov] \d[\inVar], \label{eq:outCov_integral} \\
	\inoutCov 	= \Cov[\inVar]{\inVar, \nlf(\inVar)} &= \integral (\inVar - \inMean)(\nlf(\inVar) - \outMean)\T \N[\inVar][\inMean][\inCov] \d[\inVar], \label{eq:inoutCov_integral}
\end{align}
where \( \N[\inVar][\inMean][\inCov] \) denotes probability density of Gaussian random variable.
The goal of a moment transform is to compute the moments in \crefrange{eq:outMean_integral}{eq:inoutCov_integral} given the moments of the input variable.
Since \( \nlf \) is nonlinear, the integrals cannot be solved analytically in general and have to be approximated by a numerical quadrature.
An integral w.r.t. a Gaussian with arbitrary mean and covariance can be converted to an integral over a standard Gaussian, such that
\begin{equation}\label{eq:gaussian_integral_decoupling}
	\integral \nlf(\inVar)\N[\inVar][\inMean][\inCov] \d[\inVar] = \integral \nlf(\inMean + \inCovFct\inVarUnt)\N[\inVarUnt][\zeroVec][\eye] \d[\inVarUnt],
\end{equation}
where we used a change of variables \( \inVar = \inMean + \inCovFct\inVarUnt \) with \( \inCov = \inCovFct\inCovFct\T \). 
The standard numerical integration rules can now be applied, which leads to a weighted sum approximation
\begin{equation}\label{eq:gaussian_integral_quadrature}
	\integral \nlf(\inMean + \inCovFct\inVarUnt)\N[\inVarUnt][\zeroVec][\eye] \d[\inVarUnt] 
	= \sum\limits_{i=1}^{\trNum} w_i\nlf(\inMean + \inCovFct\inVarUnt_i),
\end{equation}
where \( w_i \) are the quadrature weights and \( \inVar_i = \inMean + \inCovFct\inVarUnt_i \) are the sigma-points (evaluation points, design points, abscissas), both of which are prescribed by the specific quadrature rule to satisfy certain optimality criteria.
The vectors \( \inVarUnt_i \) are called unit sigma-points.


For instance, the $r$-th order Gauss-Hermite (GH) rule \citep{Gautschi2004,Ito2000} uses sigma-points, which are determined as the roots of the $r$-th degree univariate Hermite polynomial $H_r(x)$. 
Integration of vector valued functions ($ \inDim > 1 $) is handled by a multidimensional grid of points formed by the Cartesian product, leading to the exponential growth ($N = r^\inDim$) w.r.t. dimension \( \inDim \).
The GH weights are computed according to \citep{Saerkkae2013} as
\begin{equation}
	w_i = \frac{r!}{[rH_{r-1}(x^{(i)})]^2} \enspace .
\end{equation}
The rule incurs no integration error if the integrand is a (multivariate) polynomial of \emph{pseudo-degree} \( \leq 2r - 1 \).
The well-known Unscented transform (UT) \citep{Julier2000} is also a simple quadrature rule, that uses $ \trNum = 2\inDim + 1 $ deterministically chosen sigma-points \( \inVar_i = \inMean + \inCovFct\inVarUnt_i \) with unit sigma-points defined as columns of the matrix
\begin{equation}
	\bmqty{\inVarUnt_0 & \inVarUnt_1 & \ldots & \inVarUnt_{2\inDim}} = \bmqty{\mathbf{0} & c\eye_\inDim & -c\eye_\inDim}
\end{equation}
where $ \mathbf{I}_\inDim $ denotes $ \inDim\times\inDim $ identity matrix. 
The corresponding UT weights are defined by
\begin{equation}
	w_0 = \frac{\kappa}{\inDim+\kappa}, \quad w_i = \frac{1}{2(\inDim+\kappa)}, \quad i = 1, \ldots, 2\inDim
\end{equation}
with scaling factor $ c = \sqrt{\inDim+\kappa} $.
The UT rule can integrate (multivariate) polynomials of \emph{total degree} \( \leq 3 \) without incurring approximation error.
Spherical-radial (SR) integration rule, which is a basis of the CKF \citep{Arasaratnam2009}, is very similar to the UT, but lacks the centre point.
Thus it uses \( \trNum = 2\inDim \) sigma-points given by 
\begin{equation}
\bmqty{\inVarUnt_1 & \ldots & \inVarUnt_{2\inDim}} = \bmqty{c\eye_\inDim & -c\eye_\inDim}
\end{equation}
with \( c = \sqrt{\inDim} \) and weights \( w_i = 1/2\inDim, \quad i = 1, \ldots, 2\inDim \).
The UT and SR rule are all instances of the fully symmetric rules \citep{McNamee1967}.
Together with GH all of these are examples of classical numerical quadratures.

In the next section, we introduce an alternative view of numerical integration, upon which, we base our proposed moment transform.

\section{Gaussian Process Quadrature}\label{sec:gpq_transform}
The main difference between the classical quadrature and the Bayesian quadrature, is that in the Bayesian case the whole numerical procedure is viewed as a probabilistic inference, where, after conditioning on the data, an entire density over the solution is obtained as a result (rather than a single value).
That is to say, we start with a prior distribution over the integral which is then transformed into a posterior distribution given our sigma-points and function evaluations.
Prior over the integral is induced by putting a prior on the integrated function itself.

\subsection{Gaussian Process Regression Model}\label{ssec:gpq_gp_regression}
Uncertainty over functions is naturally expressed by a stochastic process. 
In Bayesian quadrature, Gaussian processes (GP) are used for their favourable analytical properties. 
Gaussian process is a collection of random variables indexed by elements of an index set, any finite number of which has a joint Gaussian density \citep{Rasmussen2006}. 
That is, for any finite set of indices \( \mb{x}' = \bmqty{\inVar'_1 & \ldots & \inVar'_m} \), it holds that
\begin{equation}
	\N*[\bmqty{ \nlf*(\inVar'_1) & \ldots & \nlf*(\inVar'_m) }\T][\zeroVec][\kerMat]
\end{equation}
where the kernel (covariance) matrix \( \kerMat \) is made up of pair-wise evaluations of the kernel (covariance) function. 
The element of \( \kerMat \) at position \( (i, j) \) is given by \( \bqty{\kerMat}_{ij} \,=\, \kerf(\inVar_i, \inVar_j) = \Cov[\nlf*]{\nlf*(\inVar_i),\, \nlf*(\inVar_j)} \), where \( \mathbb{C} \) is the covariance operator. 
Choosing a kernel, which in principle can be any symmetric positive definite function of two arguments, introduces modelling assumptions about the underlying function.
Since the GP regression is a non-parametric model, it is more expressive than a parametric fixed-order polynomial regression models used in construction of the classical quadrature rules.
In Bayesian terms, choosing a kernel specifies a GP prior \( p(\nlf*) \) over functions.
Updating the prior with the data \( \D \,=\, \left\lbrace \pqty{\mathbf{x}_i,\ \nlf*(\mathbf{x}_i)} \right\rbrace_{i=1}^\trNum \), which consist of the sigma-points \( \mb{x} \,=\, \bmqty{\inVar_1 & \ldots & \inVar_\trNum} \) and the function evaluations \( \gpObs \,=\, \bmqty{\nlf*(\inVar_1) & \ldots & \nlf*(\inVar_\trNum)}\T \), leads to a GP posterior \( p(\nlf* \mid \D) \) with predictive mean and variance given by
\begin{align}
	\E[\nlf*]{\nlf*(\inVar) \mid \D} &= m_g(\inVar) = \vb{\kerf}\T(\inVar)\kerMat\I\gpObs, \label{eq:gpr_mean} \\
	\V[\nlf*]{\nlf*(\inVar) \mid \D} &= \sigma^2_{\nlf*}(\inVar) = \kerf(\inVar, \inVar) - \vb{\kerf}\T(\inVar)\kerMat\I\vb{\kerf}(\inVar), \label{eq:gpr_variance}
\end{align}
where the \( i \)-th element of \( \vb{\kerf}(\inVar) \) is \( \bqty{\vb{\kerf}(\inVar)}_i = \kerf(\inVar, \inVar_i) \) and \( \mathbb{V} \) denotes the variance operator.
These simple equations follow from the formula for conditional Gaussian densities \citep{Rasmussen2006}.
\Cref{fig:gp_regression} depicts predictive moments of the GP posterior. 
Notice, that in places where the function evaluations are lacking, the GP model is more uncertain.
\begin{figure}[h]
	\centering
	\input{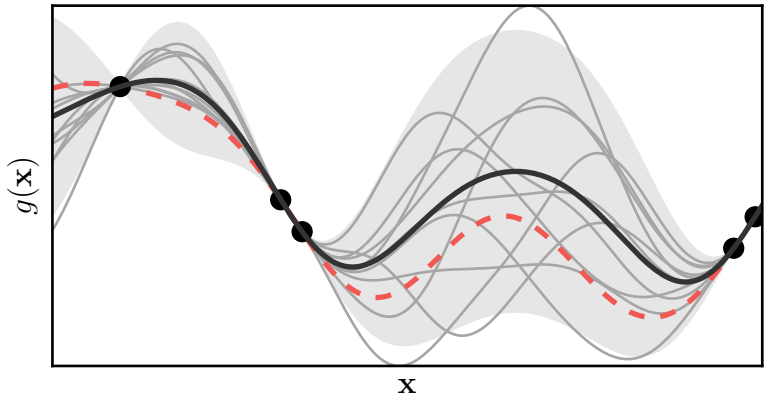}
	\caption[]{True function (dashed), GP posterior mean (solid), observed function values (dots) and GP posterior samples (grey). The shaded area represents GP posterior predictive uncertainty ($ \pm 2\, \sigma_{\nlf*}(\inVar) $), which is collapsed near the observations.}
	\label{fig:gp_regression}
\end{figure}

At first, introducing randomness\footnote{Here we mean \emph{epistemic uncertainty}, which is due to the lack of knowledge not due to inherent randomness.} over a \emph{known}\footnote{In a sense, that it can be evaluated for any argument.} integrand may seem quite counter-intuitive. 
However, consider the fact that the quadrature rule of the form \labelcref{eq:gaussian_integral_quadrature} only sees the integrand through a limited number of function values, which effectively means, that the rule is unaware of the function behaviour in areas where no evaluations are available.
The GP regression model then serves as an instrument allowing us to acknowledge this reality.

\subsection{Integral Moments}
Using a GP for modelling the integrand has a great analytical advantages. 
Note that, since we use a GP, the posterior density over the integral is Gaussian, which is due to the fact that an integral is a linear operator acting on a Gaussian distributed function.\footnote{This is a generalization of the invariance property of Gaussians under affine transformations.}
The posterior mean and variance of the integral \( \E[\inVar]{\nlf*(\inVar)} = \integral \nlf*(\inVar)p(\inVar) \d[\inVar] \) are \citep{Rasmussen2003a}
\begin{align}
	\E[\nlf*]{\E[\inVar]{\nlf*(\inVar)} \mid \D} 
	&= \E[\inVar]{\E[\nlf*]{\nlf*(\inVar) \mid \D}} = \E[\inVar]\big{\vb{\kerf}\T(\inVar)}\kerMat\I\gpObs, \label{eq:integral_mean} \\
	\V[\nlf*]{\E[\inVar]{\nlf*(\inVar)} \mid \D} 
	&= \E[\inVar,\inVar']{\Cov[\nlf*]{\nlf*(\inVar), \nlf*(\inVar')\mid \D}} \nonumber \\ 
	&= \E[\inVar,\inVar']{\kerf(\inVar, \inVar')} - \E[\inVar]\big{\vb{\kerf}\T(\inVar)}\kerMat\I\E[\inVar']{\vb{\kerf}(\inVar')}. \label{eq:integral_variance}
\end{align}
From \cref{eq:integral_mean}, we can see that the mean of the integral is identical to integrating the GP posterior mean, which effectively serves as an approximation of the integrated function.
The posterior variance of the integral, given by \cref{eq:integral_variance}, can be construed as a model of the integration error.
The \Cref{fig:bayesian_quadrature} depicts the schematic view of the GP quadrature, where integral is applied to a GP distributed integrand yielding a Gaussian density over the value of the integral.
Alternatively, one could imagine integrating each realization of the GP posterior separately, yielding a different result every time.

\begin{figure}[!h]
	\centering
	\input{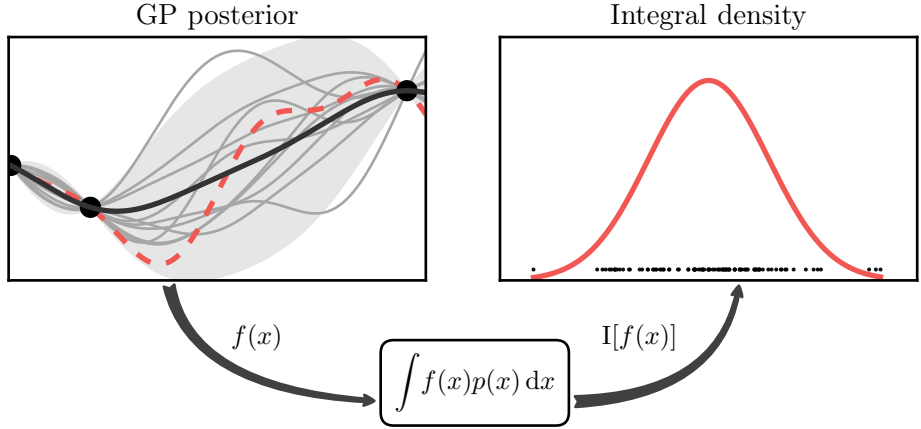}
	\caption{Gaussian process quadrature. GP distributed function is mapped through a linear operator to yield a Gaussian density over its solution. The GP posterior mean approximation (solid black) of the true function (dashed) and the GP predictive variance (gray band) are based on the function evaluations (dots).
		}
	\label{fig:bayesian_quadrature}
\end{figure}
In light of this view, we could think of the classical quadratures as returning a Dirac distribution over the solution, where all the probability mass is concentrated at one value.
In this regard, Bayesian quadrature rule is more realistic, because it is able to acknowledge uncertainty in the integrand due to the minimal number of available evaluations.

Also worth noting, is that the classical quadrature rules define precise locations of sigma-points, whereas the BQ does not prescribe any point sets, which raises questions about their placement.
In \citep{OHagan1991,Minka2000}, the optimal point set is determined by minimizing the posterior integral variance \eqref{eq:integral_variance}.
Another approach developed in \citep{Osborne2012a} uses a sequential active sampling scheme.
Applications of GPQ in this article rely on point sets of the classical rules.


\section{GPQ Moment Transformation}\label{sec:moment_transform}
In this section, we propose the moment transform based on the GPQ and show how it implicitly utilizes posterior integral variance in the moment transformation process.
First, we define a general GPQ moment transform, which is applicable for any kernel function, and then give relations for a concrete transform based on the popular RBF (Gaussian) kernel.

We begin by employing a GPQ for approximate evaluation of the moment integrals in \crefrange{eq:outMean_integral}{eq:inoutCov_integral}. 
For a moment, consider a case when the nonlinearity in \cref{eq:nonlinearity} is a scalar function \( \nlf*(\inVar):\ \R^\inDim\to\R \). 
Since the source of uncertainty is now, not only in the input \( \inVar \), but the nonlinearity \( \nlf* \) as well, the transformed moments also need to reflect this fact. 
The GPQ transform then approximates the moments as follows
\begin{align}\label{eq:gpq_mt_approximation}
	\E{\outVar*} = \E[\inVar]{\nlf*(\inVar)} &\approx \E[\nlf*, \inVar]{\nlf*(\inVar)} \\
	\V{\outVar*} = \V[\inVar]{\nlf*(\inVar)} &\approx \V[\nlf*, \inVar]{\nlf*(\inVar)} \\
	\Cov{\inVar, \outVar*} = \Cov[\inVar]{\inVar, \nlf*(\inVar)} &\approx \Cov[\nlf*, \inVar]{\inVar, \nlf*(\inVar)}
\end{align}
where, using the law of total expectation and variance, we can further write
\begin{align}
	\E[\nlf*, \inVar]{\nlf*(\inVar)} 
	&= \E[\nlf*]{\E[\inVar]{\nlf*(\inVar)}} = \E[\inVar]{\E[\nlf*]{\nlf*(\inVar)}}, \label{eq:gpq_mt_mean}\\
	\V[\nlf*, \inVar]{\nlf*(\inVar)} 
	&= \E[\nlf*]{\V[\inVar]{\nlf*(\inVar)}} + \V[\nlf*]{\E[\inVar]{\nlf*(\inVar)}}  \label{eq:gpq_mt_variance_0}\\
	&= \E[\inVar]{\V[\nlf*]{\nlf*(\inVar)}} + \V[\inVar]{\E[\nlf*]{\nlf*(\inVar)}}, \label{eq:gpq_mt_variance_1}\\
	\Cov[\nlf*, \inVar]{\inVar, \nlf*(\inVar)} 
	&= \E[\inVar]{\inVar\E[\nlf*]{\nlf*(\inVar)}} - \E[\inVar]{\inVar}\E[\nlf*, \inVar]{\nlf*(\inVar)}. \label{eq:gpq_mt_crosscovariance}
\end{align}
The \cref{eq:gpq_mt_mean} shows that the mean of the integral is equivalent to integrating the GP mean function. 
Since the variance decompositions in \crefrange{eq:gpq_mt_variance_0}{eq:gpq_mt_variance_1} are equivalent, both can be used to achieve the same goal. 
The form \labelcref{eq:gpq_mt_variance_1} was utilized in derivation of the GP-ADF \citep{Deisenroth2012}, which relies on the solution to the problem of prediction with GPs at uncertain inputs \citep{Girard2003}. 
So, even though these results were derived to solve a seemingly different problem, we point out, that by using the form \labelcref{eq:gpq_mt_variance_1}, the uncertainty of the mean integral (as seen in the last term of \cref{eq:gpq_mt_variance_0}) is implicitly reflected in the resulting covariance.
Furthermore, the form \labelcref{eq:gpq_mt_variance_1} is preferable, because it is more amenable to analytical expression and implementation.
Note, that for the deterministic case, when the \emph{integrand} variance \( \V[\nlf*]{\nlf*(\inVar)} = 0 \) and the \emph{integral} variance \( \V[\nlf*]{\E[\inVar]{\nlf*(\inVar)}} = 0 \), the \crefrange{eq:gpq_mt_mean}{eq:gpq_mt_crosscovariance} fall back to the classical expressions given by \crefrange{eq:outMean_integral}{eq:inoutCov_integral}.
Compared to the deterministic case the transformed GPQ variance is inflated by the uncertainty in \( \nlf* \).

\subsection{Derivations of the transformed moments}
So far we have stated the results only for the case of scalar function.
In the following summary, general vector functions \( \nlf(\inVar): \R^\inDim\to\R^\outDim \) are modelled by a single GP. 
That is, one GP models every output dimension using the same values of kernel parameters.
Note, that in the following derivations we omit the conditioning on data in the GP predictive moments and use the shorthand \( \E[\nlf]{\nlf(\inVar)} \triangleq \E[\nlf]{\nlf(\inVar) \mid \D},\ \Cov[\nlf]{\nlf(\inVar)} \triangleq \Cov[\nlf]{\nlf(\inVar) \mid \D} \).

Expressions for GPQ transformed moments are derived by plugging in the GP predictive moments from \crefrange{eq:gpr_mean}{eq:gpr_variance} into the general expressions in \crefrange{eq:gpq_mt_mean}{eq:gpq_mt_crosscovariance}.
The transformed mean in \cref{eq:gpq_mt_mean} thus becomes
\begin{equation}\label{eq:gpq_mt_mean_derivation}
	\outMeanApp = \E[\inVar]{\E[\nlf]{\nlf(\inVar)}} = \mb{y}\T\kerMat\I\E[\inVar]\big{\vb{\kerf}(\inVar)} = \mb{y}\T\wm.
\end{equation}
The transformed covariance in \cref{eq:gpq_mt_variance_1} can be decomposed
\begin{align}\label{eq:gpq_mt_covariance_derivation_0}
	\outCovApp = \Cov[\nlf, \inVar]{\nlf(\inVar)} 
	&= \Cov[\inVar]{\E[\nlf]{\nlf(\inVar)}} + \E[\inVar]{\Cov[\nlf]{\nlf(\inVar)}} \\
	&= \E[\inVar]\big{\E[\nlf]{\nlf(\inVar)}\E[\nlf]{\nlf(\inVar)}\T} - \outMeanApp\outMeanApp\T + \E[\inVar]{\Cov[\nlf]{\nlf(\inVar)}} \\
	&= \mb{y}\T\kerMat\I\E[\inVar]\big{\vb{\kerf}(\inVar)\vb{\kerf}\T(\inVar)}\kerMat\I\mb{y} - \outMeanApp\outMeanApp\T + \gpExpVar\eye,
\end{align}
where \( \gpExpVar = \E[\inVar]\big{\sigma^2_{\nlf*}(\inVar)} = \E[\inVar]{\kerf(\inVar, \inVar)} - \trace\pqty\big{\E[\inVar]\big{\vb{\kerf}(\inVar)\vb{\kerf}\T(\inVar)}\kerMat\I} \).
The diagonal matrix in the last term reflects the fact that the outputs of \( \nlf \) are not correlated (modelled independently).
Finally, the cross-covariance becomes
\begin{align}\label{eq:gpq_mt_crosscovariance_derivation}
\inoutCovApp = \Cov[\nlf, \inVar]{\inVar, \nlf(\inVar)} 
&= \E[\inVar]\big{\inVar\E[\nlf]{\nlf(\inVar)}\T} - \E[\inVar]{\inVar}\E[\inVar]{\E[\nlf]{\nlf(\inVar)}} \\
&= \E[\inVar]\big{\inVar\vb{\kerf}\T(\inVar)}\kerMat\I\mb{y} - \inMean\outMeanApp\T
\end{align}
Summary of the proposed GPQ moment transform is given below.
\ \\
\ \\\textbf{General GPQ moment transform}\\
The general GPQ based Gaussian approximation to the joint distribution of $ \inVar $ and a transformed random variable $ \outVar = \nlf(\inVar) $, where $ \N*[\inVar][\inMean][\inCov] $, is given by
\begin{equation}
	\N*[\bmqty{\inVar \\ \outVar}][\bmqty{\inMean \\ \outMeanApp}][\bmqty{\inCov & \inoutCovApp \\ \inoutCovApp\T & \outCovApp}]
\end{equation}
where the transformed moments are computed as
\begin{align}
	\outMeanApp 	&= \mb{y}\T\wm, \label{eq:gpq_mean_out} \\
	\outCovApp 		&= \mb{y}\T\wc\mb{y} - \outMean\outMean\T + \gpExpVar\eye, \label{eq:gpq_cov_out} \\
	\inoutCovApp	&= \wcc\mb{y} - \inMean\outMean\T, \label{eq:gpq_covio_out} \\
	\gpExpVar 		&= \bar{k} - \trace\pqty\big{\kerCov\kerMat\I}, \label{eq:gpq_expected_gp_variance}
\end{align}
and where $ \bqty\big{\mb{y}}_{*e} = \bmqty{y^e_1 & \ldots & y^e_\trNum}\T $ are the function values of the $ e $-th output dimension of $ \nlf(\inVar) $.
The kernel matrix $ \kerMat $ is defined in \cref{ssec:gpq_gp_regression} and the GPQ weights are 
\begin{equation}\label{eq:gpq_weights}
	\wm = \kerMat\I\kerMean,\ \wc = \kerMat\I\kerCov\kerMat\I \ \textnormal{and}\ \wcc = \kerCCov\kerMat\I, 
\end{equation}
where
\begin{align}
	\bqty{\kerMean}_{i} 	&= \E[\inVar]{\kerf\pqty{\inVar, \inVar_i}}, \label{eq:gpq_kernel_mean} \\
	\bqty{\kerCov}_{ij} 	&= \E[\inVar]{\kerf\pqty{\inVar, \inVar_i}\kerf\pqty{\inVar, \inVar_j}}, \label{eq:gpq_kernel_covariance} \\
	\bqty{\kerCCov}_{*j} 	&= \E[\inVar]{\inVar\kerf\pqty{\inVar, \inVar_j}}, \label{eq:gpq_kernel_crosscovariance} \\
	\bar{k}					&= \E[\inVar]{\kerf\pqty{\inVar, \inVar}}.
\end{align}
The sigma-points \( \inVar_i \) can be chosen \emph{arbitrarily}.

\subsection{Properties of the general GPQ transform}
An important requirement of moment transforms is that they produce valid covariance matrices. 
\Cref{thm:gpq_psd} given below states that the proposed GPQ transform always produces positive semi-definite covariance matrix.
For proof we use the following lemma.
\begin{lem}\label{lem:horn}
	For any \( m\times n \) matrix \( \mb{X} \) and positive definite \( n\times n \) matrix \( \mb{A} \), the matrix \( \mb{XAX}\T \) is positive semi-definite.
\end{lem}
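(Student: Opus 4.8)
The plan is to verify the two defining properties of a positive semi-definite matrix—symmetry and nonnegativity of the associated quadratic form—directly from the definition, reducing everything about \( \mb{X}\mb{A}\mb{X}\T \) to the assumed positive definiteness of \( \mb{A} \) through a single change of variables.

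First I would dispatch symmetry. Since \( \mb{A} \) is positive definite it is in particular symmetric, so \( \pqty{\mb{X}\mb{A}\mb{X}\T}\T = \mb{X}\mb{A}\T\mb{X}\T = \mb{X}\mb{A}\mb{X}\T \), which exhibits \( \mb{X}\mb{A}\mb{X}\T \) as a symmetric \( m\times m \) matrix and lets the rest of the argument focus purely on the quadratic form.

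Next I would establish nonnegativity. I would fix an arbitrary \( \vb{v}\in\R^m \) and introduce the auxiliary vector \( \vb{w} = \mb{X}\T\vb{v}\in\R^n \), so that
\begin{equation}
	\vb{v}\T\mb{X}\mb{A}\mb{X}\T\vb{v} = \pqty{\mb{X}\T\vb{v}}\T\mb{A}\pqty{\mb{X}\T\vb{v}} = \vb{w}\T\mb{A}\vb{w} \geq 0,
\end{equation}
where the final inequality is exactly the positive definiteness of \( \mb{A} \) (which gives \( \vb{w}\T\mb{A}\vb{w} > 0 \) for \( \vb{w}\neq\zeroVec \) and \( =0 \) for \( \vb{w}=\zeroVec \)). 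As \( \vb{v} \) is arbitrary, this shows \( \mb{X}\mb{A}\mb{X}\T \) is positive semi-definite.

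There is no real obstacle of substance: the change of variables \( \vb{w}=\mb{X}\T\vb{v} \) is the entire content of the proof. The one point worth flagging is why the conclusion is only positive \emph{semi}-definiteness rather than definiteness—namely that \( \mb{X}\T \) may have a nontrivial kernel (for instance whenever \( m>n \)), so that \( \vb{w}=\mb{X}\T\vb{v}=\zeroVec \) for some nonzero \( \vb{v} \), on which the quadratic form vanishes. This is precisely the scenario that arises for the GPQ covariance in \crefrange{eq:gpq_cov_out}{eq:gpq_covio_out}, which is exactly why the semi-definite version of the lemma is the right tool.
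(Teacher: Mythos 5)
Your proof is correct and is precisely the standard argument that the paper defers to by citation --- its entire proof reads ``See Horn and Johnson, Observation 7.1.6,'' and your change of variables \( \vb{w} = \mb{X}\T\vb{v} \) with \( \vb{v}\T\mb{X}\mb{A}\mb{X}\T\vb{v} = \vb{w}\T\mb{A}\vb{w} \geq 0 \) is exactly the argument behind that reference, including your correct remark that a nontrivial kernel of \( \mb{X}\T \) is what prevents strengthening the conclusion to definiteness. One minor observation: the paper \emph{defines} \( \mb{a} \succeq 0 \) purely by the quadratic-form condition \( \vb{x}\T\mb{a}\vb{x} \geq 0 \) for all \( \vb{x} \), so your preliminary symmetry check, while harmless and appropriate under the usual textbook definition, is not actually required for the lemma as the paper uses it.
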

\begin{proof}
	See \citep[Observation 7.1.6, p. 399]{Horn1990}.
\end{proof}
In the following, let \( \mb{a} \succeq 0\ \Leftrightarrow\ \vb{x}\T\mb{a}\vb{x} \geq 0,\ \forall \vb{x}\in\R^n \) for any \( n\times n \) matrix \( \mb{a} \).
\begin{thm}[]\label{thm:gpq_psd}
	The GPQ transformed covariance is positive semi-definite.
\end{thm}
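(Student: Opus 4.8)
The plan is to substitute the closed forms of the GPQ weights into the transformed covariance \cref{eq:gpq_cov_out} and rewrite it as a sum of two manifestly positive semi-definite pieces. Using $\wc = \kerMat\I\kerCov\kerMat\I$, $\outMeanApp = \mb{y}\T\kerMat\I\kerMean$, and the symmetry of $\kerMat$ (hence of $\kerMat\I$), the first two terms combine by factoring $\mb{y}\T\kerMat\I$ on the left and $\kerMat\I\mb{y}$ on the right, giving
\[
	\outCovApp = \mb{y}\T\kerMat\I\pqty{\kerCov - \kerMean\kerMean\T}\kerMat\I\mb{y} + \gpExpVar\eye.
\]
It then suffices to show that each summand is positive semi-definite, since the sum of two positive semi-definite matrices is positive semi-definite.

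For the first summand I would identify $\kerCov - \kerMean\kerMean\T$ as a covariance matrix. From \cref{eq:gpq_kernel_mean,eq:gpq_kernel_covariance} we have $\kerMean = \E[\inVar]{\vb{\kerf}(\inVar)}$ and $\kerCov = \E[\inVar]{\vb{\kerf}(\inVar)\vb{\kerf}\T(\inVar)}$, so $\kerCov - \kerMean\kerMean\T = \Cov[\inVar]{\vb{\kerf}(\inVar)}$ is the covariance matrix of the random vector $\vb{\kerf}(\inVar)$ induced by the input distribution $\N*[\inVar][\inMean][\inCov]$, and is therefore positive semi-definite. Setting $\mb{X} = \mb{y}\T\kerMat\I$, the first summand is $\mb{X}\pqty{\kerCov - \kerMean\kerMean\T}\mb{X}\T$, which is positive semi-definite by \Cref{lem:horn}; the congruence argument $\vb{x}\T\mb{X}\mb{A}\mb{X}\T\vb{x} = \pqty{\mb{X}\T\vb{x}}\T\mb{A}\pqty{\mb{X}\T\vb{x}} \geq 0$ carries over verbatim from positive definite to positive semi-definite $\mb{A}$ (or one may apply the lemma to $\kerCov - \kerMean\kerMean\T + \epsilon\eye$ and let $\epsilon \downarrow 0$).

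For the second summand I would show $\gpExpVar \geq 0$. By the cyclic property of the trace, $\trace\pqty{\kerCov\kerMat\I} = \E[\inVar]{\vb{\kerf}\T(\inVar)\kerMat\I\vb{\kerf}(\inVar)}$, and with $\bar{k} = \E[\inVar]{\kerf(\inVar,\inVar)}$ this yields $\gpExpVar = \E[\inVar]{\kerf(\inVar,\inVar) - \vb{\kerf}\T(\inVar)\kerMat\I\vb{\kerf}(\inVar)} = \E[\inVar]{\sigma^2_{\nlf*}(\inVar)}$, where the integrand is exactly the GP posterior predictive variance \cref{eq:gpr_variance}. Since a variance is pointwise nonnegative, its expectation is nonnegative, so $\gpExpVar\eye \succeq 0$, and $\outCovApp$ is exhibited as a sum of two positive semi-definite matrices.

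The only routine step is the factoring in the first display; the substantive content is the two structural observations — that $\kerCov - \kerMean\kerMean\T$ is a genuine covariance matrix and that $\gpExpVar$ equals the expected GP posterior variance — together with the mild gap that \Cref{lem:horn} is stated for positive definite $\mb{A}$ while here $\kerCov - \kerMean\kerMean\T$ is only positive semi-definite. This last point is the main (and only minor) obstacle, and I would dispose of it with the verbatim congruence argument or the perturbation noted above; everything else reduces to the nonnegativity of a variance.
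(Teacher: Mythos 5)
Your proof is correct and takes essentially the same route as the paper's: the identical decomposition \( \outCovApp = \mb{y}\T\kerMat\I\pqty{\kerCov - \kerMean\kerMean\T}\kerMat\I\mb{y} + \gpExpVar\eye \), the recognition of \( \kerCov - \kerMean\kerMean\T \) as the covariance matrix of \( \vb{\kerf}(\inVar) \), an appeal to \Cref{lem:horn}, and nonnegativity of \( \gpExpVar \). If anything you are more careful than the paper, which applies \Cref{lem:horn} (stated for positive \emph{definite} \( \mb{A} \)) directly to the merely semi-definite \( \kerCov - \kerMean\kerMean\T \) and asserts \( \gpExpVar \geq 0 \) without comment; your congruence/perturbation fix and the identification \( \gpExpVar = \E[\inVar]{\sigma^2_{\nlf*}(\inVar)} \) make both steps explicit.
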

\begin{proof}
	Using the expressions for the GPQ weights from \cref{eq:gpq_weights}, we can write 
	\[
	\outCov = \mb{y}\T\kerMat\I\pqty\big{\kerCov - \kerMean\kerMean\T}\kerMat\I\mb{y} + \gpExpVar\eye = \mb{z}\T\widetilde{\kerCov}\mb{z} + \gpExpVar\eye = \widetilde{\outCov} + \gpExpVar\eye, 
	\]
	where \( \widetilde{\outCov} = \mb{z}\T\widetilde{\kerCov}\mb{z},\ \mb{z} = \kerMat\I\mb{y} \) and \( \widetilde{\kerCov} = \kerCov - \kerMean\kerMean\T \).
	We recognize that \( \widetilde{\kerCov} = \Cov{\vb{\kerf}(\inVar)} = \E{\vb{\kerf}(\inVar)\vb{\kerf}\T(\inVar)} - \E{\vb{\kerf}(\inVar)}\E{\vb{\kerf}(\inVar)}\T \). 
	From the property of covariance matrices it follows that \( \widetilde{\kerCov} \succeq 0 \).
	The \Cref{lem:horn} implies that \( \widetilde{\outCov} = \mb{z}\T\widetilde{\kerCov}\mb{z} \succeq 0 \) for any matrix \( \mb{Z} \).
	Finally, since \( \gpExpVar \geq 0 \), we have that \( \outCov = \widetilde{\outCov} + \gpExpVar\eye \succeq 0 \).
\end{proof}
Evidently, the GPQ moment transform hinges upon the kernel expectations given by \crefrange{eq:gpq_kernel_mean}{eq:gpq_kernel_crosscovariance}. 
Since we are already using one quadrature to approximate moments, it is thus preferable that these expectations be analytically tractable.
A list of tractable kernel-density pairs is provided in \citep{Briol2015}.
A popular choice in many applications is an RBF (Gaussian) kernel, expectations of which are summarized below.
\begin{thm}[GPQ transform with RBF kernel]
	Assuming a change of variables has taken place in the Gaussian weighted integrals given by \cref{eq:gaussian_integral_decoupling} and the kernel is of the form
	\begin{equation}\label{eq:kernel_rbf}
		\kerf\pqty{\inVarUnt, \inVarUnt'} = \rbfScale^2 \exp\pqty\big{-\tfrac{1}{2}\pqty{\inVarUnt - \inVarUnt'}\T\rbfLam\I\pqty{\inVarUnt - \inVarUnt'}},
	\end{equation}
	where \( \rbfScale \) is scaling parameter and \( \rbfLam = \diag{\bmqty{\ell^2_1 & \ldots & \ell^2_\inDim}} \) is lengthscale, then the expectations given by \crefrange{eq:gpq_kernel_mean}{eq:gpq_kernel_crosscovariance} take on the form
	\begin{align}
		\bqty{\kerMean}_i		&= \alpha^2\vqty{\rbfLam\I + \eye}^{-\tfrac{1}{2}} \exp\pqty\big{-\tfrac{1}{2}\inVarUnt_i\T\pqty{\rbfLam + \eye}\I\inVarUnt_i}, \label{eq:gpq_rbf_mean} \\
		\bqty{\kerCov}_{ij} 	&= \alpha^4\vqty{2\rbfLam\I + \eye}^{-\tfrac{1}{2}} 
		\exp\pqty{ -\tfrac{1}{2}\pqty{\inVarUnt_i\T\rbfLam\I\inVarUnt_i + \inVarUnt_j\rbfLam\I\inVarUnt_j - \vb{z}_{ij}\T\pqty{2\rbfLam\I + \eye}\I\vb{z}_{ij}} },\label{eq:gpq_rbf_covariance} \\
		\bqty{\kerCCov}_{*j} 	&= \alpha^2\vqty{\rbfLam\I + \eye}^{-\tfrac{1}{2}} \exp\pqty\big{-\tfrac{1}{2}\inVarUnt_j\T\pqty{\rbfLam + \eye}\I\inVarUnt_j} \pqty{\rbfLam + \eye}\I\inVarUnt_j, \label{eq:gpq_rbf_crosscovariance} \\
		\bar{k}					&= \alpha^2
	\end{align}
	where \( \vb{z}_{ij} = \rbfLam\I(\inVarUnt_i + \inVarUnt_j) \).
\end{thm}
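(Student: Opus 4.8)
The plan is to exploit the fact that, after the change of variables, every expectation in \crefrange{eq:gpq_kernel_mean}{eq:gpq_kernel_crosscovariance} is a standard-Gaussian-weighted integral of a product of RBF kernels, and that an RBF kernel is, up to a constant, itself a Gaussian density in its first argument. First I would record the identity
\[
\kerf\pqty{\inVarUnt, \inVarUnt_i} = \rbfScale^2\pqty{2\pi}^{\inDim/2}\vqty{\rbfLam}^{1/2}\,\N[\inVarUnt][\inVarUnt_i][\rbfLam],
\]
which recasts each kernel evaluation as a scaled Gaussian with mean \( \inVarUnt_i \) and covariance \( \rbfLam \). The second ingredient is the Gaussian marginalisation identity \( \integral \N[\inVarUnt][\vb{a}][\mb{A}]\N[\inVarUnt][\vb{b}][\mb{B}]\d[\inVarUnt] = \N[\vb{a}][\vb{b}][\mb{A}+\mb{B}] \), together with its extension to a product of three Gaussians obtained by completing the square.

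For \( \bar{k} \) the claim is immediate, since \( \kerf\pqty{\inVarUnt, \inVarUnt} = \rbfScale^2 \) and the weight integrates to one. For \( \bqty{\kerMean}_i \) I would substitute the kernel-as-Gaussian identity and apply the marginalisation identity once, obtaining \( \rbfScale^2\pqty{2\pi}^{\inDim/2}\vqty{\rbfLam}^{1/2}\N[\inVarUnt_i][\zeroVec][\rbfLam+\eye] \); the \( \pqty{2\pi}^{\inDim/2} \) factors cancel, and the determinant collapses via \( \vqty{\rbfLam}^{1/2}\vqty{\rbfLam+\eye}^{-1/2} = \vqty{\rbfLam\I+\eye}^{-1/2} \) (factor \( \rbfLam\I \) out of \( \rbfLam\I+\eye \)), giving \cref{eq:gpq_rbf_mean}. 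For \( \bqty{\kerCCov}_{*j} \) I would note that \( \kerf\pqty{\inVarUnt,\inVarUnt_j}\N[\inVarUnt][\zeroVec][\eye] \) is itself proportional to a Gaussian in \( \inVarUnt \), so integrating \( \inVarUnt \) against it returns the normalising constant \( \bqty{\kerMean}_j \) times the mean of that Gaussian, namely \( \pqty{\rbfLam\I+\eye}\I\rbfLam\I\inVarUnt_j \); the matrix identity \( \pqty{\rbfLam\I+\eye}\I\rbfLam\I = \pqty{\rbfLam+\eye}\I \) then reproduces \cref{eq:gpq_rbf_crosscovariance}.

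The one genuinely computational case is \( \bqty{\kerCov}_{ij} \), where the integrand carries the product of two kernels. Here I would collect the three quadratic forms — \( \pqty{\inVarUnt - \inVarUnt_i}\T\rbfLam\I\pqty{\inVarUnt-\inVarUnt_i} \), its \( j \) counterpart, and \( \inVarUnt\T\inVarUnt \) from the weight — into a single quadratic in \( \inVarUnt \) with curvature \( 2\rbfLam\I + \eye \) and linear term governed by \( \vb{z}_{ij} = \rbfLam\I\pqty{\inVarUnt_i+\inVarUnt_j} \), then complete the square. Integrating the resulting Gaussian contributes \( \vqty{2\rbfLam\I+\eye}^{-1/2} \) (again after cancelling \( \pqty{2\pi}^{\inDim/2} \)), while the leftover constant exponent is exactly \( -\tfrac{1}{2}\pqty{\inVarUnt_i\T\rbfLam\I\inVarUnt_i + \inVarUnt_j\T\rbfLam\I\inVarUnt_j - \vb{z}_{ij}\T\pqty{2\rbfLam\I+\eye}\I\vb{z}_{ij}} \), which is \cref{eq:gpq_rbf_covariance}.

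I expect the main obstacle to be bookkeeping rather than anything conceptual: keeping the \( \pqty{2\pi}^{\inDim/2} \) normalisers and the \( \vqty{\rbfLam}^{1/2} \) factors aligned so they cancel cleanly, and verifying the determinant and inverse simplifications (the two identities above) that convert the natural output of the Gaussian identities into the stated closed forms. The completing-the-square step for \( \bqty{\kerCov}_{ij} \) is where an arithmetic slip is most likely, so I would carry \( \vb{z}_{ij} \) as a single symbol throughout to keep the linear-term algebra manageable.
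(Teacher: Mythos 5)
Your proposal is correct and follows exactly the route the paper's proof sketches: rewriting the RBF kernel as a scaled Gaussian density and applying the product-of-Gaussians formula (equivalently, completing the square), with all the determinant and inverse simplifications such as \( \vqty{\rbfLam}^{1/2}\vqty{\rbfLam+\eye}^{-1/2} = \vqty{\rbfLam\I+\eye}^{-1/2} \) and \( \pqty{\rbfLam\I+\eye}\I\rbfLam\I = \pqty{\rbfLam+\eye}\I \) checking out. In fact, you supply the details the paper explicitly omits ``for space reasons,'' so nothing further is needed.
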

\begin{proof}
	Expressions can be derived by writing the RBF kernel as a Gaussian and making use of the formulas for the product of two Gaussian densities (and the normalizing constant).
	For space reasons, we omit the fairly straightforward, but nevertheless lengthy and tedious, derivations.
\end{proof}

It is now evident that the GPQ transformed moments depend on the kernel parameters which need to be set prior to computing the weights.
The form of \( \rbfLam \) in the RBF kernel formulation above exhibits, so called, automatic relevance determination (ARD).
That is to say, by optimizing the lengthscales \( \ell_d \) dimensions contributing most to the variability in the data can be discovered, where a small \( \ell_d \) would indicate high relevance of the \( d \)-th dimension.
A typical approach in GP regression would be to optimize the kernel parameters by marginal likelihood (evidence) maximization.
However, in the BQ setting this method would likely yield unreliable parameter estimates due to the inherently minimal amount of data available.
For these reasons, we resorted to a manual choice of the parameter values which were mostly informed by the prior knowledge of the integrated function. 
In the following theorem, we prove the weight independence on the kernel scaling parameter.

\begin{thm}[Kernel scaling independence]
	Assume a scaled version of a kernel is used, so that \( \bar{\kerf}(\inVar, \inVar') = c\cdot \kerf(\inVar, \inVar') \), then the weights of the GPQ transform given in \cref{eq:gpq_weights} are independent of the scaling parameter \( c \).
\end{thm}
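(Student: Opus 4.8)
The plan is to track how each quantity appearing in the weight formulas scales when we replace $\kerf$ by $\bar{\kerf} = c\cdot\kerf$. The key observation is that scaling the kernel by $c$ scales the kernel matrix, the kernel mean vector, the kernel covariance matrix, and the kernel cross-covariance matrix by predictable powers of $c$, and these powers cancel exactly in the weight expressions \eqref{eq:gpq_weights}.

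First I would record the scaling of each building block directly from its defining expectation. Since $\bar{\kerf} = c\,\kerf$ is linear in $c$, the kernel matrix becomes $\bar{\kerMat} = c\,\kerMat$ (so $\bar{\kerMat}\I = c\I\kerMat\I$), and the kernel mean from \eqref{eq:gpq_kernel_mean} becomes $\bar{\kerMean} = c\,\kerMean$ because it is the expectation of a single factor of the kernel. The cross-covariance from \eqref{eq:gpq_kernel_crosscovariance} likewise picks up a single factor, $\bar{\kerCCov} = c\,\kerCCov$. The kernel covariance from \eqref{eq:gpq_kernel_covariance}, being the expectation of a \emph{product} of two kernel evaluations, picks up two factors: $\bar{\kerCov} = c^2\,\kerCov$.

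Next I would substitute these into each of the three weight definitions and verify the cancellation. For the mean weights, $\bar{\wm} = \bar{\kerMat}\I\bar{\kerMean} = (c\I\kerMat\I)(c\,\kerMean) = \kerMat\I\kerMean = \wm$. For the covariance weights, $\bar{\wc} = \bar{\kerMat}\I\bar{\kerCov}\bar{\kerMat}\I = (c\I\kerMat\I)(c^2\kerCov)(c\I\kerMat\I) = \kerMat\I\kerCov\kerMat\I = \wc$, where the two inverse factors contribute $c^{-2}$ that exactly offsets the $c^2$ from $\bar{\kerCov}$. For the cross-covariance weights, $\bar{\wcc} = \bar{\kerCCov}\bar{\kerMat}\I = (c\,\kerCCov)(c\I\kerMat\I) = \kerCCov\kerMat\I = \wcc$. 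In all three cases the net power of $c$ is zero, establishing the claim.

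The argument is essentially a bookkeeping exercise, so I do not anticipate a genuine obstacle; the one point requiring care is confirming that the kernel covariance \eqref{eq:gpq_kernel_covariance} truly scales as $c^2$ rather than $c$, which hinges on reading it as the expectation of the product of two separate kernel evaluations (each contributing one factor of $c$) rather than of a single joint kernel. Once that quadratic scaling is pinned down, the cancellations follow mechanically. I would present the proof as the short chain of substitutions above, optionally remarking that the scaling $\bar{k} = c\,\bar k$ in \eqref{eq:gpq_expected_gp_variance} shows the \emph{expected integrand variance} $\gpExpVar$ does scale with $c$, so while the weights are scale-invariant, the additive variance inflation term in \eqref{eq:gpq_cov_out} is not — a distinction worth noting but outside the scope of this theorem.
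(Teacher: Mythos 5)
Your proof is correct and is essentially identical to the paper's: both track the scaling of each kernel quantity (\( \kerMat, \kerMean, \kerCCov \) picking up one factor of \( c \), and \( \kerCov \) picking up \( c^2 \) since it is the expectation of a product of two kernel evaluations) and then substitute into the weight formulas to observe exact cancellation. Your closing remark that the additive term \( \gpExpVar \) scales linearly with \( c \) while the weights do not is also precisely the content of the paper's corollary following the theorem.
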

\begin{proof}
	Define a scaled kernel matrix \( {\kerMat}' = c\kerMat,\) and scaled kernel expectations \(\ \bqty{{\kerMean}'}_i = c\E[\inVar]{\kerf\pqty{\inVar, \inVar_i}},\ \bqty{{\kerCov}'}_{ij} = c^2\E[\inVar]{\kerf\pqty{\inVar, \inVar_i}\kerf\pqty{\inVar, \inVar_j}},\ \bqty{{\kerCCov}'}_{*j} = c\E[\inVar]{\inVar\kerf\pqty{\inVar, \inVar_j}} \).
	Plugging into the expressions for the GPQ weights from the \cref{eq:gpq_weights}, we get
	\begin{align}
		\wm' 	&= {\kerMean}'\pqty{\kerMat'}\I = cc\I\kerMean\kerMat\I = \wm,  \\
		\wc' 	&= \pqty{\kerMat'}\I{\kerCov'}\pqty{\kerMat'}\I = c^2c^{-2}\kerMat\I\kerCov\kerMat\I = \wc, \\
		\wcc' 	&= {\kerCCov}'\pqty{\kerMat'}\I = cc\I\kerCCov\kerMat\I = \wcc.
	\end{align}
\end{proof}

\begin{cor}
	The kernel scaling affects only the additive term in the transformed covariance, which becomes \( \gpExpVar = c\bqty{\bar{k} - \trace\pqty\big{\kerCov\kerMat\I}}. \) The transformed mean and cross-covariance are unaffected by the scaling.
\end{cor}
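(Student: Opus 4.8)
The plan is to derive this corollary directly from the preceding Kernel scaling independence theorem, which already establishes that $\wm$, $\wc$ and $\wcc$ are invariant under the replacement $\kerf \mapsto \bar{\kerf} = c\cdot\kerf$. First I would inspect each transformed moment in \crefrange{eq:gpq_mean_out}{eq:gpq_covio_out} and isolate exactly where the kernel scaling can enter. The transformed mean $\outMeanApp = \mb{y}\T\wm$ and the cross-covariance $\inoutCovApp = \wcc\mb{y} - \inMean\outMean\T$ depend on the kernel only through the weights $\wm$ and $\wcc$ (the quantities $\mb{y}$, $\inMean$ and $\outMean$ being data and input moments that carry no kernel dependence); since those weights are unchanged by the previous theorem, both moments are unaffected by $c$, which settles the second sentence of the claim.

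Next I would turn to the transformed covariance $\outCovApp = \mb{y}\T\wc\mb{y} - \outMean\outMean\T + \gpExpVar\eye$. The first term is kernel-invariant because $\wc$ is, and the second term involves only $\outMean$, which was just shown to be scaling-free; hence the sole scaling-dependent contribution is the additive term $\gpExpVar\eye$. It then remains to compute how $\gpExpVar = \bar{k} - \trace\pqty\big{\kerCov\kerMat\I}$ transforms. Using the scaling relations recorded in the proof of the previous theorem, namely $\kerMat' = c\kerMat$ and $\bqty{\kerCov'}_{ij} = c^2\bqty{\kerCov}_{ij}$, together with the observation that $\bar{k} = \E[\inVar]{\kerf\pqty{\inVar,\inVar}}$ picks up a single factor of $c$ to give $\bar{k}' = c\bar{k}$, I would substitute to obtain
\begin{equation}
	\gpExpVar' = \bar{k}' - \trace\pqty\big{\kerCov'\pqty{\kerMat'}\I} = c\bar{k} - \trace\pqty\big{c^2\kerCov\,c\I\kerMat\I} = c\bqty{\bar{k} - \trace\pqty\big{\kerCov\kerMat\I}},
\end{equation}
which is the asserted formula.

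The calculation is essentially routine, so I do not anticipate a genuine obstacle; the one point requiring care is the asymmetry in the powers of $c$, namely that $\bar{k}$ and $\kerMat$ scale linearly while $\kerCov$ scales quadratically. One must verify that the quadratic scaling of $\kerCov$ is exactly compensated by the extra inverse factor of $c$ from $\pqty{\kerMat'}\I$, leaving a single net factor of $c$ that matches the linear scaling of $\bar{k}$; this is what makes $\gpExpVar$ scale homogeneously and thus lets the common factor be pulled out cleanly.
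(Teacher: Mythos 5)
Your proof is correct and follows exactly the route the paper intends: the corollary is stated without its own proof precisely because it is the routine consequence of the scaling relations ($\kerMat' = c\kerMat$, $\kerCov' = c^2\kerCov$, plus $\bar{k}' = c\bar{k}$) established in the proof of the kernel scaling independence theorem, which is what you carry out. Your bookkeeping of the powers of $c$ — the quadratic scaling of $\kerCov$ cancelled by the single inverse factor from $\pqty{\kerMat'}\I$ — is exactly right, as is the observation that the mean and cross-covariance depend on the kernel only through the invariant weights $\wm$ and $\wcc$.
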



\subsection{Moment transforms in filtering}
An important application area for the moment transforms is in local filtering algorithms, which estimate an evolving system state from noisy measurements.
Filters are essentially inference algorithms operating on a state-space model which is given by the two discrete-time equations
\begin{align}
	\stVar_{k} &= \dynf(\stVar_{k-1}) + \stNoise_k, \label{eq:ssm_dynamics}\\
	\obsVar_{k} &= \obsf(\stVar_k) + \obsNoise_k, \label{eq:ssm_observation_model}
\end{align}
describing the system dynamics and the state observation (measurement) process respectively.
The evolution of the system state \( \stVar_k \) is described by \cref{eq:ssm_dynamics} where \( \dynf: \R^\inDim\to\R^\inDim \) is the dynamics  and the \( \N*[\stNoise_k][\vb{0}][\stNoiseCov] \) is the state noise.
The measurements \( \obsVar_k \) are produced by mapping the state through the observation model \( \obsf: \R^\inDim\to\R^\outDim \) and adding the measurement noise \( \N*[\obsNoise_k][\vb{0}][\obsNoiseCov] \).
Typically, \( \outDim \leq \inDim \).
Let \( \obsVar_{1:k} = \Bqty{\obsVar_1, \ldots, \obsVar_k} \) denote a set of measurements up to time step \( k \). 
From a probabilistic standpoint, the filtering problem is about inferring a posterior distribution
\begin{equation}\label{eq:pdf_filtering_posterior}
	p(\stVar_k \mid \obsVar_{1:k}) \propto p(\stVar_k,\, \obsVar_{k} \mid \obsVar_{1:k-1}) = p(\obsVar_k \mid \stVar_{k})p(\stVar_k \mid \obsVar_{1:k-1})
\end{equation}
where the likelihood \( p(\obsVar_k \mid \stVar_{k}) \) is obtained from \cref{eq:ssm_observation_model} and the prior \( p(\stVar_k \mid \obsVar_{1:k-1}) \) is given by the Chapman-Kolmogorov equation
\begin{equation}\label{eq:pdf_chapman_kolmogorov}
	p(\stVar_k \mid \obsVar_{1:k-1}) = \integral p(\stVar_k \mid \stVar_{k-1})p(\stVar_{k-1} \mid \obsVar_{1:k-1}) \d[\stVar_{k-1}]
\end{equation}
where the transition density \( p(\stVar_k \mid \stVar_{k-1}) \) is obtained from the system dynamics given by \cref{eq:ssm_dynamics}.
Local Gaussian filters make the simplifying assumption that the state and measurement are jointly Gaussian distributed, so that 
\begin{equation}\label{eq:pdf_gassian_assumption}
	p(\stVar_k,\, \obsVar_{k} \mid \obsVar_{1:k-1}) \approx \N[\bmqty{\stVar_k \\ \obsVar_{k}}][\bmqty{\stMean_{k|k-1} \\ \obsMean_{k|k-1}}][\bmqty{\stCov_{k|k-1} & \stObsCov_{k|k-1} \\ \stObsCov_{k|k-1} & \obsCov_{k|k-1}}],
\end{equation}
where the index notation \( k|k-1 \) means that the relevant quantity at time \( k \) is computed from \( \obsVar_{1:k-1} \).
Advantage of this simplification is that the posterior is now parametrized by the conditional mean and covariance, which are available in closed form. 
Using an update rule the state and measurement moments of the joint in \cref{eq:pdf_gassian_assumption} 
\begin{align}
	\stMean_{k|k} &= \stMean_{k|k-1} + \stObsCov_{k|k-1}(\obsCov_{k|k-1})\I\pqty\big{\obsVar_k - \obsMean_{k|k-1}}, \label{eq:kalman_update_mean} \\
	\stCov_{k|k}  &= \stCov_{k|k-1} - \stObsCov_{k|k-1}(\obsCov_{k|k-1})\I(\stObsCov_{k|k-1})\T, \label{eq:kalman_update_covariance}
\end{align}
are combined to arrive at the approximate conditional mean and covariance of the state.

To compute state predictive moments \( \stMean_{k|k-1},\ \stCov_{k|k-1} \) a moment transform is applied in a setting where the input moments are \( \stMean_{k-1|k-1},\ \stCov_{k-1|k-1} \) and the nonlinearity is the dynamics \( \dynf(\stVar_{k-1}) \).
Similarly, the measurement moments \( \obsMean_{k|k-1},\ \obsCov_{k|k-1},\ \stObsCov_{k|k-1} \) are obtained by applying a moment transform on input moments \( \stMean_{k|k-1},\ \stCov_{k|k-1} \) with nonlinearity \( \obsf(\stVar_k) \).

\section{Experiments}\label{sec:experiments}
The proposed GPQ moment transform is first tested on a polar-to-Cartesian coordinate transformation while the later experiments focus on applications in nonlinear filtering.
In all cases the GPQ transform uses the RBF kernel given by \cref{eq:kernel_rbf}.
Since the sigma-point locations are not prescribed and their choice is entirely arbitrary, we used the point sets of the classical rules mentioned in \Cref{sec:problem} for all examples.
The acronym GPQKF denotes all nonlinear Kalman filters based on the GPQ regardless of which point set they use.

\subsection{Mapping from Polar to Cartesian Coordinates}

Conversion from polar to Cartesian coordinates is a ubiquitous nonlinearity appearing in radar sensors or laser range finders and is given by
\begin{equation}\label{eq:polar2cartesian}
	\bmqty{x \\ y} = 
	\bmqty{r\cos(\theta) \\ r\sin(\theta)}.
\end{equation}
Since the mapping is conditionally linear (for fixed \( \theta \)) and we use a kernel with ARD in the moment transform, we can exploit this fact and set the kernel lengthscales to \( \ell = \bmqty{60 & 6} \) while the scaling was set to \( \alpha = 1 \).
Note, that we set the lengthscale corresponding to range to a relatively large value. 
This is because the larger lengthscales in the kernel correspond to a slower variation in the approximated function.

We compared the performance of the spherical radial transform (SR), which is basis of the cubature Kalman filter \citep{Arasaratnam2009}, and the GPQ transform with SR points (GPQ-SR) for 100 different input moments.
The 10 different positions on a spiral in polar coordinates were chosen as input means \( \inMean_i = \bmqty{r_i & \theta_i} \).
For each mean we assigned 10 different input covariance matrices \( \inCov_j = \diag{\bmqty{\sigma^2_r & \sigma^2_{\theta,j}}} \), where \( \sigma_r = \SI{0.5}{m} \) and \( \sigma_{\theta,j} \in \bqty{\SI{6}{\degree},\, \SI{36}{\degree}} \) for \( j = 1, \ldots, 10 \).
\Cref{fig:polar_spiral} depicts the input means in polar coordinates.
\begin{figure}[!h]
	\centering
	\input{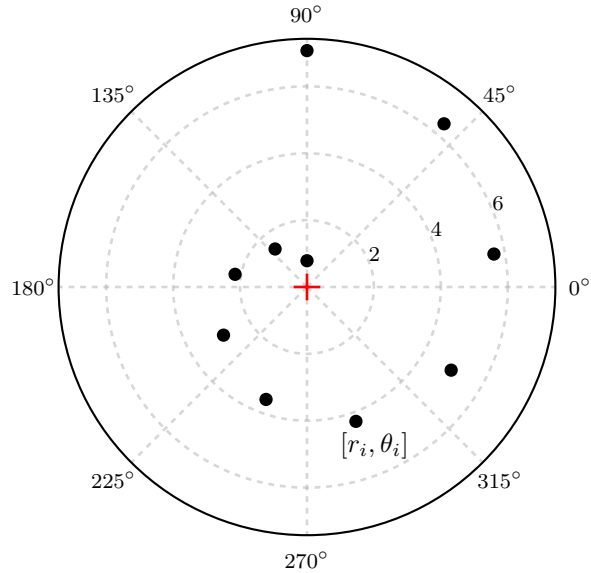}
	\caption{Input means are placed on a spiral. For each input mean \( \inMean_i = \bmqty{r_i & \theta_i} \) (black dot) the radius variance is fixed at \( \sigma_r = \SI{0.5}{m} \) and 10 azimuth variances are considered so that \( \sigma_\theta \in \bqty{\SI{6}{\degree},\, \SI{36}{\degree}} \).}
	\label{fig:polar_spiral}
\end{figure}
As a measure of an agreement between two Gaussian densities we used the symmetrized KL-divergence given by 
\begin{align}\label{eq:skl}
	\mathrm{SKL} 
	&= \frac{1}{2}\Bqty{\mathbb{KL}\pqty{\N[\outVar][\outMean][\outCov] \,\|\, \N[\outVar][\outMeanApp][\outCovApp]} + \mathbb{KL}\pqty{\N[\outVar][\outMeanApp][\outCovApp] \,\|\, \N[\outVar][\outMean][\outCov]}} \\
	&= \frac{1}{4}\Bqty{(\outMean - \outMeanApp)\T\outCov\I(\outMean-\outMeanApp) + (\outMeanApp - \outMean)\T\outCovApp\I(\outMeanApp - \outMean) + \trace(\outCov\I\outCovApp) + \trace(\outCovApp\I\outCov) - 2\outDim},
\end{align}
where \( E = \dim(\outVar) \).
The ground truth transformed mean \( \outMean \) and covariance \( \outCov \) were computed using the Monte Carlo method with \num{10000} samples.
Two SKL scores were considered; the average over means and an average over azimuth variances.

The \Cref{fig:polar2cartesian_skl} shows the SKL score calculated for each configuration on the spiral.
\begin{figure}[h!]
	\centering
	\input{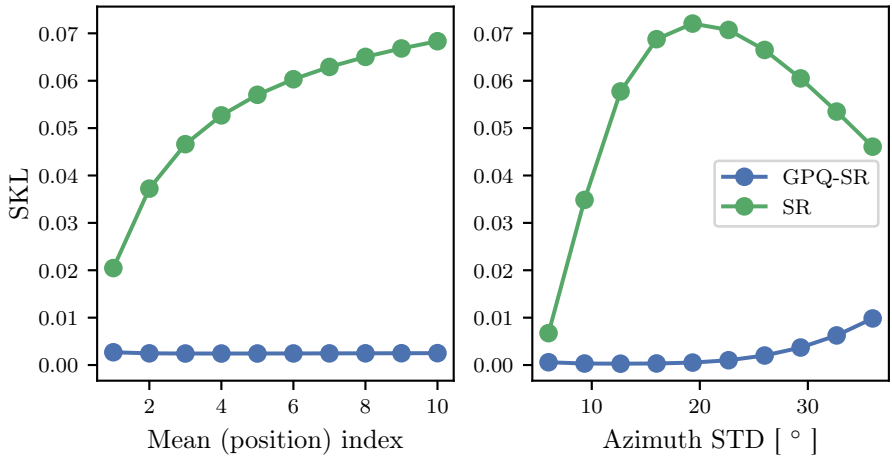}
	\caption{Performance comparison of the spherical radial (SR) and GPQ with SR points (GPQ-SR) moment transforms in terms of averaged symmetrized KL-divergence. Left: average over a range of azimuth variances; Right: average over the range of input means (positions on the spiral).}
	\label{fig:polar2cartesian_skl}
\end{figure}
The left pane of \Cref{fig:polar2cartesian_skl} shows results for individual means averaged over the azimuth variances, whereas the right pane displays averaged SKL over the means.
In both cases our proposed moment transform outperforms the classical quadrature transform with the same SR point set.

\subsection{UNGM}
The performance of nonlinear sigma-point filters based on GPQ transform was first tested in \citep{Prueher2016} on a univariate non-stationary growth model (UNGM), where the system dynamics and the observation model are given by
\begin{align}
	x_k &\,=\, \frac{1}{2}x_{k-1} \,+\, \frac{25x_{k-1}}{1+x^2_{k-1}} \,+\, 8\cos(1.2\,k) \,+\, q_{k-1} \, ,  \\
	z_k &\,=\, \frac{1}{20} x^2_{k-1} \,+\, r_k \, ,
\end{align}
with the state noise $q_{k-1} \sim \mathcal{N}(0, 10)$, measurement noise $r_k \!\sim\! \mathcal{N}(0, 1)$ and initial conditions $x_{0|0} \sim \mathcal{N}(0, 5)$.
This model is frequently used as a benchmark in nonlinear filtering \citep{Gordon1993,Kitagawa1996}.

For this problem, all of the considered GPQKFs used the same kernel scaling $\alpha = 1$. 
The lengthscale was set to $\ell = 3.0$ for the UT, $\ell = 0.3$ for SR and GH-5, and $\ell = 0.1$ for all higher-order GH sigma-point sets. 
The GPQKFs that used UT and GH sigma-points of order 5, 7, 10, 15 and 20 were compared with their classical quadrature-based counterparts, namely, the UKF and the GHKF of the same orders. 
The UKF operated with $\kappa = 0$.
We performed $100$ simulations, each for $ K=500 $ time steps. 

For evaluation of the filter performance, we used the root-mean-square error (RMSE)
\begin{equation}\label{eq:rmse}
	\mathrm{RMSE} = \sqrt{\frac{1}{K}\sum_{k=1}^{K} \pqty{\inVar_k - \vb{m}^\inVar_{k|k}}^2}
\end{equation}
to measure the overall difference between the state estimate $\vb{m}^\inVar_{k|k}$ and the true state \( \inVar_k \) across all time steps. 
The negative log-likelihood of the state estimate \( \vb{m}^\inVar_{k|k} \) and covariance \( \inCov^\inVar_k \)
\begin{equation}\label{eq:nll}
	\mathrm{NLL} = -\log p(\inVar_k \mid \vb{z}_{1:k}) = \frac{1}{2}\bqty{ \log \vqty{2\pi\inCov^\inVar_k} + (\inVar_{k} - \vb{m}^\inVar_{k|k})\T (\inCov^\inVar_k)\I (\inVar_{k} - \vb{m}^\inVar_{k|k}) }
\end{equation}
was used to measure the overall model fit \citep{Gelman2013}. 
As a metric that takes into account the estimated state covariance, the inclination indicator \citep{Li2006} given by
\begin{equation}\label{eq:nci}
	\nu = \frac{10}{K} \sum_{k=1}^{K} \log_{10}\frac{ \pqty\big{\inVar_k - \vb{m}^\inVar_{k|k}}^\top \pqty\big{\inCov^\inVar_{k|k}}\I \pqty\big{\inVar_k - \vb{m}^\inVar_{k|k}} }{ \pqty\big{\inVar_k - \vb{m}^\inVar_{k|k}}\T \mb{\Sigma}\I_{k} \pqty\big{\inVar_k - \vb{m}^\inVar_{k|k}} }
\end{equation}
was used, where $ \mb{\Sigma}_{k} $ is the sample mean-square-error matrix. 
The filter is said to be optimistic if it underestimates the actual error, which is indicated by $\nu~>~0$ and vice versa. 
A perfectly credible filter would provide $ \nu = 0 $, that is, it would neither overestimate nor underestimate the actual error.

\Crefrange{tab:rmse}{tab:nci} show average values of the performance criteria across simulations with bootstrapped estimates of $\pm 2$ standard deviations \citep{Wasserman2007}.
\begin{table} 
	\centering
	\smallskip
	\begin{tabular}{ l l >{$\,}l<{\:$} >{$\:}l<{\,$} }
		\toprule
		\text{Point set} & N & \text{GPQ} & \text{Classical} \\ 
		\midrule
		SR 		& 2 	& 6.157 \pm 0.071	& 13.652 \pm 0.253	\\
		UT 		& 3 	& 7.124 \pm 0.131 	&  7.103 \pm 0.130 	\\ 
		GH5 	& 5 	& 8.371 \pm 0.128 	& 10.466 \pm 0.198 	\\ 
		GH7 	& 7 	& 8.360 \pm 0.043 	&  9.919 \pm 0.215 	\\ 
		GH10 	& 10 	& 7.082 \pm 0.038	&  8.035 \pm 0.193 	\\
		GH15 	& 15 	& 6.944 \pm 0.048	&  8.224 \pm 0.188	\\
		GH20 	& 20 	& 6.601 \pm 0.058	&  7.406 \pm 0.193	\\
		\bottomrule
	\end{tabular}
	\caption{The average root-mean-square error.}
	\label{tab:rmse}
\end{table}
As evidenced by the results in Table~\ref{tab:rmse}, the Bayesian quadrature achieves superior RMSE performance for all sigma-point sets. 
In the classical quadrature case the performance improves with increasing number of sigma-points used. 
\begin{table} 
	\centering
	\begin{tabular}{ l l >{$\,}l<{\:$} >{$\:}l<{\,$} }
		\toprule
		\text{Point set} & N & \text{GPQ} & \text{Classical} 	\\ 
		\midrule
		SR 		& 2 	& 3.328 \pm 0.026	& 56.570 \pm 2.728	\\
		UT 		& 3 	& 4.970 \pm 0.343 	&  5.306 \pm 0.481 	\\
		GH5 	& 5 	& 4.088 \pm 0.064 	& 14.722 \pm 0.829 	\\
		GH7 	& 7 	& 4.045 \pm 0.017 	& 12.395 \pm 0.855 	\\
		GH10 	& 10 	& 3.530 \pm 0.012 	&  7.565 \pm 0.534 	\\
		GH15 	& 15 	& 3.468 \pm 0.014	&  7.142 \pm 0.557	\\
		GH20 	& 20 	& 3.378 \pm 0.017 	&  5.664 \pm 0.488	\\
		\bottomrule
	\end{tabular}
	\caption{The average negative log-likelihood.}
	\label{tab:nll}
\end{table}
Table \ref{tab:nll} shows that the performance of GPQKF is clearly superior in terms of NLL, which indicates that the estimates produced by the GPQ-based filters are better representations of the unknown true state development.
\begin{table} 
	\centering
	\smallskip
	\begin{tabular}{ l l >{$\,}l<{\:$} >{$\:}l<{\,$} }
		\toprule
		\text{Point set} & N & \text{GPQ} & \text{Classical} \\ 
		\midrule
		SR 		& 2 	& 1.265 \pm 0.010	& 18.585 \pm 0.045	\\
		UT 		& 3 	& 0.363 \pm 0.108 	&  0.897 \pm 0.088 	\\
		GH5 	& 5 	& 4.549 \pm 0.013 	&  9.679 \pm 0.068 	\\
		GH7 	& 7 	& 4.638 \pm 0.006 	&  8.409 \pm 0.076 	\\
		GH10 	& 10 	& 2.520 \pm 0.006 	&  5.315 \pm 0.058 	\\
		GH15 	& 15 	& 2.331 \pm 0.008	&  5.424 \pm 0.059	\\
		GH20 	& 20 	& 1.654 \pm 0.007	&  4.105 \pm 0.055	\\
		\bottomrule
	\end{tabular}
	\caption{The average inclination indicator.}
	\label{tab:nci}
\end{table}
The self-assessment of the filter performance is more credible in the case of GPQ, as indicated by lower inclination \( \nu \) in the Table~\ref{tab:nci}. 
This indicates that the GPQ-based filters are more conservative in their covariance estimates - a consequence of including additional uncertainty (integral variance), which the classical quadrature-based filters do not employ. 
Also note, that the variance of all the evaluated criteria for GPQ-based filters is mostly an order of magnitude lower.

To achieve competitive results, the kernel lengthscale $\ell$ had to be manually set for each filter separately. 
This was done by running the filters with increasing lengthscale, plotting the performance metrics and choosing the value which gave the smallest RMSE and the inclination closest to zero. 
Figure~\ref{fig:hypers_sensitivity} illustrates the effect of changing lengthscale on the overall performance of the GPQKF with UT sigma-points.
\begin{figure} 
	\centering
	\input{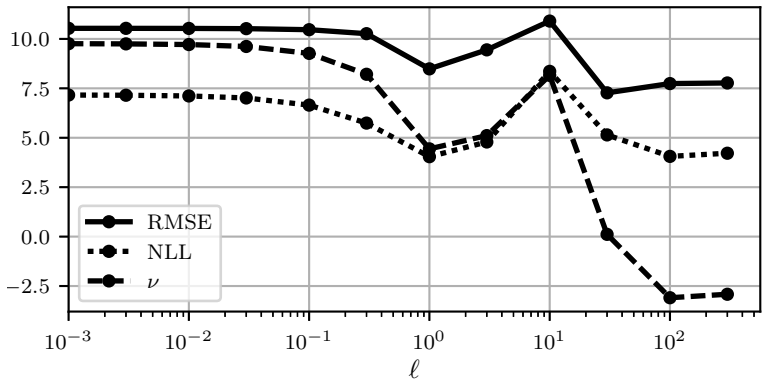}
	\caption{Sensitivity of GPQKF performance (using UT sigma-points) to changes in the lengthscale parameter $\ell$. The choice $ \ell=3 $ minimizes RMSE and yields nearly optimal inclination \( \nu \).}
	\label{fig:hypers_sensitivity}
\end{figure}

\subsection{Target Tracking}

As a more application oriented example, we considered a target tracking scenario adopted from \citep{Athans1968,Julier2000}.
A spherical object falls down from high altitude entering the Earth's atmosphere with high velocity.
The nonlinear dynamics is described by the following set of differential equations
\begin{align}
	\dot{p}(t) &= -v(t) + q_1(t), \label{eq:target_dyn_pos} \\
	\dot{v}(t) &= -v^2(t)\theta(t) e^{-\gamma p(t)} + q_2(t), \\
	\dot{\theta}(t) &= q_3(t), \label{eq:target_dyn_theta}
\end{align}
where \( \gamma = 0.164 \) is a constant and the system state \( \vb{x} = \bmqty{p & v & \theta} \) consists of position (altitude) \( p \), velocity \( v \) and a constant ballistic parameter \( \theta \).
The zero-mean state noise is characterized by \( \E{q_i(t)q_j(s)} = \stNoiseCov\delta(t-s)\delta(i-j) \), where \( \stNoise = \bmqty{q_1 & q_2 & q_3} \).
The range measurements are produced at discrete time intervals by a radar positioned at the altitude of \SI{30}{km} and \SI{30}{km} horizontally to the vertical path of the falling object.
Thus the observation model is
\begin{equation}\label{eq:target_observation_model}
	y_k = \sqrt{s_x^2 + (s_y - p_k)^2} + r_k,
\end{equation}
where \( (s_x,\ s_y) \) is the radar position. 
The measurements were generated with frequency \SI{10}{Hz} and the measurement noise is zero-mean with variance \( \sigma^2_y = \SI{9.2903e-4}{km^2} \).
The mean and covariance of the system initial condition were set to
\begin{align}\label{eq:init_true}
	\vb{x}_0 &= \bmqty{\SI{90}{km} & \SI{6}{km.s^{-1}} & 1.5} \\
	\mb{P}_0 &= \diag{\bmqty{\SI{0.0929}{km^2} & \SI{1.4865}{km^2.s^{-2}} & 10^{-4}}}
\end{align}
while the filter used different initial state estimate
\begin{align}\label{eq:init_model}
	\stMean_{0|0} &= \bmqty{\SI{90}{km} & \SI{6}{km.s^{-1}} & 1.7} \\
	\stCov_{0|0} &= \diag{\bmqty{\SI{0.0929}{km^2} & \SI{1.4865}{km^2.s^{-2}} & 10}}
\end{align}
which implies a lighter object than in reality. 

In the experiments, we focused on the comparison of our GPQKF with the UT points and the UKF, because this filter was previously used in \citep{Julier2000} to demonstrate its superiority over the EKF on the same tracking problem.
The parameters of the UKF were set to \( \kappa = 0,\ \alpha =1,\ \beta = 2 \) following the recommended heuristics \citep{Saerkkae2013}. 
The GPQKF used different kernel parameters for the dynamics, \( \alpha_f = 0.5,\ \ell_f = \bmqty{10 & 10 & 10} \), and the measurement nonlinearity, \( \alpha_h = 0.5,\ \ell_h = \bmqty{15 & 20 & 20} \).
All filters operated with a discrete-time model obtained by Euler approximation with step size \( \Delta t = \SI{0.1}{s} \).
The discretized model is given by
\begin{align}
	p(k+1) 		&= p(k) - \Delta t\, v(k) + q_1(k), \label{eq:target_dyn_pos_disc} \\
	v(k+1) 		&= v(k) - \Delta t\, v^2(k)\theta(k)\exp(-\gamma p(k)) + q_2(k), \\
	\theta(k+1) &= \theta(k) + q_3(k) \label{eq:target_dyn_theta_disc}.
\end{align}
We generated 100 truth trajectories by simulating the continuous-time dynamics, given by the \crefrange{eq:target_dyn_pos}{eq:target_dyn_theta}, for 30 time steps by 4th-order Runge-Kutta scheme and computed the average RMSE and inclination indicator \( \nu \) for both tested filters.
\Cref{fig:reentry_pos_vel} shows realizations of the altitude and velocity trajectories along with the average trajectory.
Note that when the object is passing directly in front of the radar at approximately \( t=10\ \mathrm{s} \) (i.e. altitude 30 km), the system is almost unobservable.

\begin{figure}[h!]
	\centering
	\input{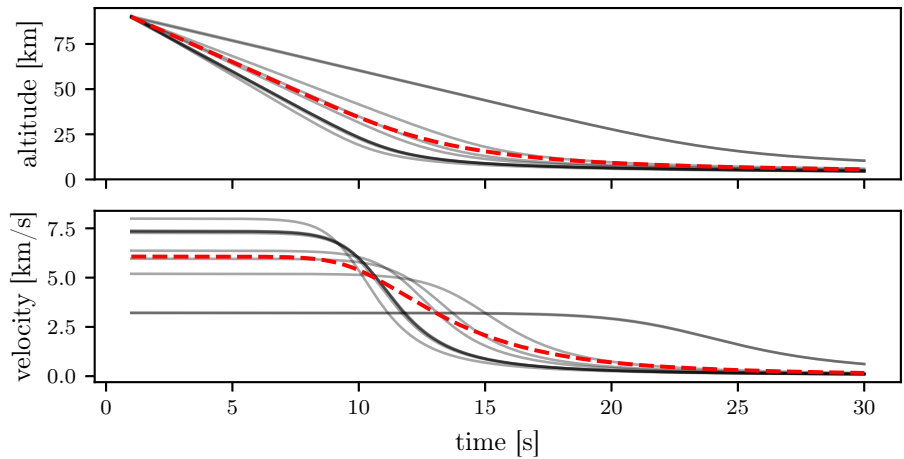}
	\caption{Altitude and velocity evolution in time. Trajectory realizations (black) and the average trajectory (red). The greatest deceleration occurs in the period from 10 to 20 seconds.}
	\label{fig:reentry_pos_vel}
\end{figure}
\Cref{fig:reentry_position_rmse} depicts the RMSE for each time step averaged over trajectory simulations.
The RMSE of the GPQKF tends to be better for all state vector components.
The biggest difference is evident in the RMSE of the ballistic parameter where GPQKF shows significantly better performance during the period of the greatest deceleration.
\begin{figure}[h!]
	\centering
	\input{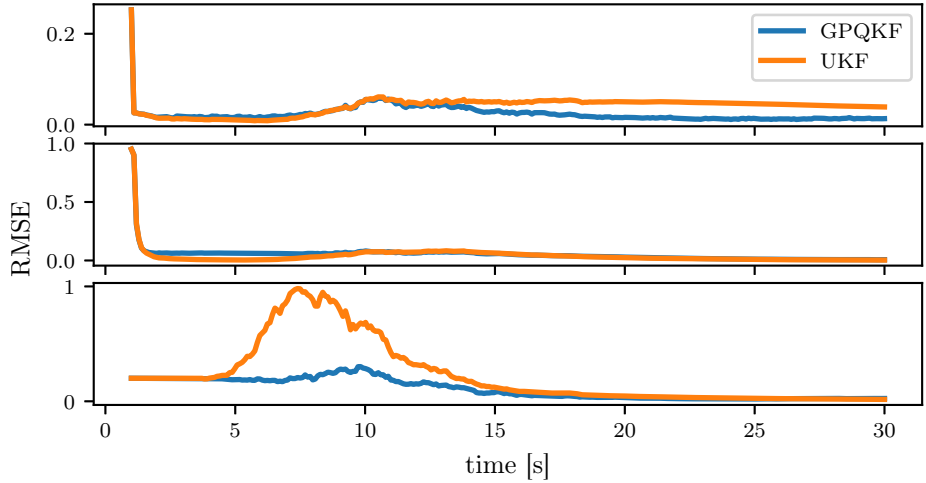}
	\caption{Evolution of the average RMSE in time for the GPQKF with the UT points and the UKF. From top to bottom: position, velocity and ballistic parameter.}
	\label{fig:reentry_position_rmse}
\end{figure}
Overall, the UKF shows signs of an unbalanced estimator as evidenced from \Cref{fig:reentry_position_inclination}, where the inclination \( \nu \) rises significantly above zero, indicating excessive optimism.
The GPQKF manages to stay mostly balanced (\( \nu \) wobbles around zero) with the exception of velocity, where it tips toward pessimism towards the end of the trajectory.
This behaviour is mostly likely caused by the inclusion of additional functional uncertainty in the transformed covariance as shown in \cref{eq:gpq_mt_variance_0,eq:gpq_mt_variance_1}.
\begin{figure}[h!]
	\centering
	\input{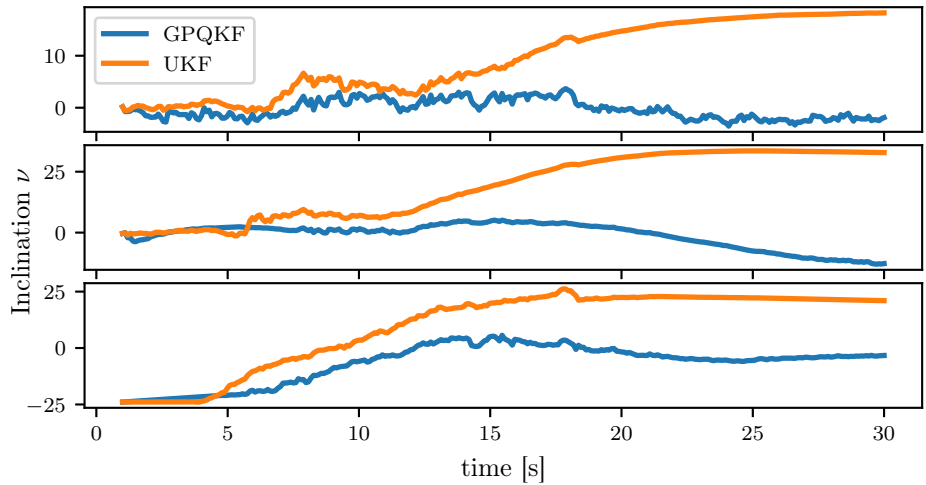}
	\caption{Evolution of the average inclination in time for the GPQKF with the UT points and the UKF. From top to bottom: position, velocity and ballistic parameter.}
	\label{fig:reentry_position_inclination}
\end{figure}




\section{Conclusion}\label{sec:conclusion}
In this article, we have shown how a Bayesian view of quadrature can be leveraged for the design of general purpose moment transform.
Unlike the classical transforms, the proposed GPQ transform accounts for the integration error incurred in computing the mean by inflating the transformed covariance.
The underlying model in the proposed transform is a non-parametric GP regression model, which brings a lot of advantages. 
Namely, the transform is not restricted by polynomial assumptions on the integrand (unlike the classical methods) and it quantifies predictive uncertainty, which eventually translates into integral uncertainty.
The proposed moment transform is entirely general, in that the equations hold for any kernel and input density, however, analytically tractable kernel-density pairs are preferable.
We showed that the transform may outperform classical transforms on a coordinate conversion and two nonlinear sigma-point filtering examples.
In both experiments, the filters based on the GPQ give more realistic estimates of the covariance, hence are better at self-assessing their estimation error.
Currently, the biggest challenge is finding optimal values of the kernel parameters.


\bibliographystyle{plainnat}
\bibliography{refdb}

\begin{thebibliography}{42}
\providecommand{\natexlab}[1]{#1}
\providecommand{\url}[1]{\texttt{#1}}
\expandafter\ifx\csname urlstyle\endcsname\relax
  \providecommand{\doi}[1]{doi: #1}\else
  \providecommand{\doi}{doi: \begingroup \urlstyle{rm}\Url}\fi

\bibitem[Arasaratnam and Haykin(2009)]{Arasaratnam2009}
I.~Arasaratnam and S.~Haykin.
\newblock {Cubature Kalman Filters}.
\newblock \emph{IEEE Transactions on Automatic Control}, 54\penalty0
  (6):\penalty0 1254--1269, 2009.

\bibitem[Athans et~al.(1968)Athans, Wishner, and Bertolini]{Athans1968}
M.~Athans, R.~Wishner, and A.~Bertolini.
\newblock {Suboptimal State Estimation for Continuous-Time Nonlinear Systems
  from Discrete Noisy Measurements}.
\newblock \emph{IEEE Transactions on Automatic Control}, 13\penalty0
  (5):\penalty0 504--514, 1968.
\newblock \doi{10.1109/TAC.1968.1098986}.

\bibitem[Bhar(2010)]{Bhar2010}
R.~Bhar.
\newblock \emph{{Stochastic filtering with applications in finance}}.
\newblock World Scientific, 2010.
\newblock ISBN 978-981-4304-85-6.

\bibitem[{Briol} et~al.(2015){Briol}, {Oates}, {Girolami}, {Osborne}, and
  {Sejdinovic}]{Briol2015}
F.-X. {Briol}, C.~J. {Oates}, M.~{Girolami}, M.~A. {Osborne}, and
  D.~{Sejdinovic}.
\newblock {Probabilistic Integration: A Role for Statisticians in Numerical
  Analysis?}
\newblock \emph{ArXiv e-prints}, December 2015.

\bibitem[Deisenroth et~al.(2009)Deisenroth, Huber, and
  Hanebeck]{Deisenroth2009}
M.~P. Deisenroth, M.~F. Huber, and U.~D. Hanebeck.
\newblock {Analytic moment-based Gaussian process filtering}.
\newblock In \emph{Proceedings of the 26th Annual International Conference on
  Machine Learning - ICML '09}, pages 1--8. ACM Press, 2009.
\newblock \doi{10.1145/1553374.1553403}.

\bibitem[Deisenroth et~al.(2012)Deisenroth, Turner, Huber, Hanebeck, and
  Rasmussen]{Deisenroth2012}
M.~P. Deisenroth, R.~D. Turner, M.~F. Huber, U.~D. Hanebeck, and C.~E.
  Rasmussen.
\newblock {Robust Filtering and Smoothing with Gaussian Processes}.
\newblock \emph{IEEE Transactions on Automatic Control}, 57\penalty0
  (7):\penalty0 1865--1871, 2012.
\newblock \doi{10.1109/TAC.2011.2179426}.

\bibitem[Diaconis(1988)]{Diaconis1988}
P.~Diaconis.
\newblock {Bayesian numerical analysis}.
\newblock In S.~S. Gupta and J.~O. Berger, editors, \emph{Statistical Decision
  Theory and Related Topics IV}, pages 163--175. Springer, 1988.

\bibitem[Dun\'{\i}k et~al.(2013)Dun\'{\i}k, Straka, and \v{S}imandl]{Dunik2013}
J.~Dun\'{\i}k, O.~Straka, and M.~\v{S}imandl.
\newblock {Stochastic Integration Filter}.
\newblock \emph{IEEE Transactions on Automatic Control}, 58\penalty0
  (6):\penalty0 1561--1566, 2013.

\bibitem[Gautschi(2004)]{Gautschi2004}
W.~Gautschi.
\newblock \emph{Orthogonal Polynomials: Computation and Approximation}.
\newblock Numerical Mathematics and Scientific Computation. Oxford University
  Press, 2004.
\newblock ISBN 978-0198506720.

\bibitem[Gelman(2013)]{Gelman2013}
A.~Gelman.
\newblock \emph{{Bayesian Data Analysis}}.
\newblock Chapman and Hall/CRC, 3rd edition, 2013.
\newblock ISBN 978-1439840955.

\bibitem[Gillijns et~al.(2006)Gillijns, Mendoza, Chandrasekar, De~Moor,
  Bernstein, and Ridley]{Gillijns2006}
S.~Gillijns, O.B. Mendoza, J.~Chandrasekar, B.L.R. De~Moor, D.S. Bernstein, and
  A.~Ridley.
\newblock What is the ensemble kalman filter and how well does it work?
\newblock In \emph{American Control Conference, 2006}, page~6, June 2006.
\newblock \doi{10.1109/ACC.2006.1657419}.

\bibitem[Girard et~al.(2003)Girard, Rasmussen, Qui\~{n}onero Candela, and
  Murray-Smith]{Girard2003}
A.~Girard, C.~E. Rasmussen, J.~Qui\~{n}onero Candela, and R.~Murray-Smith.
\newblock {Gaussian Process Priors With Uncertain Inputs Application to
  Multiple-Step Ahead Time Series Forecasting}.
\newblock In S.~Becker, S.~Thrun, and K.~Obermayer, editors, \emph{Advances in
  Neural Information Processing Systems 15}, pages 545--552. MIT Press, 2003.

\bibitem[Gordon et~al.(1993)Gordon, Salmond, and Smith]{Gordon1993}
N.~J. Gordon, D.~J. Salmond, and A.~F.~M. Smith.
\newblock {Novel approach to nonlinear/non-Gaussian Bayesian state estimation}.
\newblock \emph{IEE Proceedings F (Radar and Signal Processing)}, 140\penalty0
  (2):\penalty0 107--113, 1993.

\bibitem[Grewal et~al.(2007)Grewal, Weill, and Andrews]{Grewal2007}
M.~S. Grewal, L.~R. Weill, and A.~P. Andrews.
\newblock \emph{{Global Positioning Systems, Inertial Navigation, and
  Integration}}.
\newblock Wiley, 2007.
\newblock ISBN 978-0-470-09971-1.

\bibitem[Heine et~al.(2006)Heine, Kawohl, and King]{Heine2006}
T.~Heine, M.~Kawohl, and R.~King.
\newblock {Robust Model Predictive Control using the Unscented Transformation}.
\newblock In \emph{2006 IEEE Conference on Computer Aided Control System
  Design}, pages 224--230, Oct 2006.
\newblock \doi{10.1109/CACSD-CCA-ISIC.2006.4776650}.

\bibitem[Horn and Johnson(1990)]{Horn1990}
R.~A. Horn and C.~R. Johnson.
\newblock \emph{Matrix Analysis}.
\newblock Cambridge University Press, 1990.
\newblock ISBN 978-0-521-38632-6.

\bibitem[Ito and Xiong(2000)]{Ito2000}
K.~Ito and K.~Xiong.
\newblock {Gaussian Filters for Nonlinear Filtering Problems}.
\newblock \emph{IEEE Transactions on Automatic Control}, 45\penalty0
  (5):\penalty0 910--927, 2000.
\newblock \doi{10.1109/9.855552}.

\bibitem[Jiang et~al.(2003)Jiang, Sidiropoulos, and Giannakis]{Jiang2003}
T.~Jiang, N.D. Sidiropoulos, and G.B. Giannakis.
\newblock Kalman filtering for power estimation in mobile communications.
\newblock \emph{Wireless Communications, IEEE Transactions on}, 2\penalty0
  (1):\penalty0 151--161, 2003.
\newblock \doi{10.1109/TWC.2002.806386}.

\bibitem[Julier et~al.(2000)Julier, Uhlmann, and Durrant-Whyte]{Julier2000}
S.~J. Julier, J.~K. Uhlmann, and H.~F. Durrant-Whyte.
\newblock {A New Method for the Nonlinear Transformation of Means and
  Covariances in Filters and Estimators}.
\newblock \emph{IEEE Transactions on Automatic Control}, 45\penalty0
  (3):\penalty0 477--482, 2000.

\bibitem[Kitagawa(1996)]{Kitagawa1996}
G.~Kitagawa.
\newblock {Monte Carlo Filter and Smoother for Non-Gaussian Nonlinear State
  Space Models}.
\newblock \emph{Journal of Computational and Graphical Statistics}, 5\penalty0
  (1):\penalty0 1--25, 1996.

\bibitem[Ko et~al.(2007)Ko, Klein, Fox, and Haehnel]{Ko2007}
J.~Ko, D.~J. Klein, D.~Fox, and D.~Haehnel.
\newblock {GP-UKF: Unscented Kalman Filters with Gaussian Process Prediction
  and Observation Models}.
\newblock In \emph{Intelligent Robots and Systems, 2007. IROS 2007. IEEE/RSJ
  International Conference on}, volume~54, pages 1901--1907. IEEE, 2007.
\newblock \doi{10.1109/IROS.2007.4399284}.

\bibitem[Li and Zhao(2006)]{Li2006}
X.~R. Li and Z.~Zhao.
\newblock {Measuring Estimator's Credibility: Noncredibility Index}.
\newblock In \emph{Information Fusion, 2006 9th International Conference on},
  pages 1--8, 2006.
\newblock \doi{10.1109/ICIF.2006.301770}.

\bibitem[McNamee and Stenger(1967)]{McNamee1967}
J.~McNamee and F.~Stenger.
\newblock {Construction of fully symmetric numerical integration formulas}.
\newblock \emph{Numerische Mathematik}, 10\penalty0 (4):\penalty0 327--344,
  1967.
\newblock ISSN 0029599X.
\newblock \doi{10.1007/BF02162032}.

\bibitem[Minka(2000)]{Minka2000}
T.~P. Minka.
\newblock {Deriving Quadrature Rules from Gaussian Processes}.
\newblock Technical report, Statistics Department, Carnegie Mellon University,
  Tech. Rep, 2000.

\bibitem[Oates et~al.(2015)Oates, Osborne, and Girolami]{Oates2015}
C.~J. Oates, M.~A Osborne, and M.~Girolami.
\newblock {Frank-Wolfe Bayesian Quadrature : Probabilistic Integration with
  Theoretical Guarantees}.
\newblock \emph{ArXiv e-prints}, pages 1--19, 2015.
\newblock URL \url{http://arxiv.org/abs/1506.02681}.

\bibitem[O'Hagan(1991)]{OHagan1991}
A.~O'Hagan.
\newblock {Bayes-Hermite quadrature}.
\newblock \emph{Journal of Statistical Planning and Inference}, 29\penalty0
  (3):\penalty0 245--260, 1991.
\newblock \doi{10.1016/0378-3758(91)90002-V}.

\bibitem[Osborne and Garnett(2012)]{Osborne2012}
M.~A. Osborne and R.~Garnett.
\newblock {Bayesian Quadrature for Ratios}.
\newblock In \emph{International Conference on Artificial Intelligence and
  Statistics}, pages 832--840, 2012.

\bibitem[Osborne et~al.(2012)Osborne, Rasmussen, Duvenaud, Garnett, and
  Roberts]{Osborne2012a}
M.~A. Osborne, C.~E. Rasmussen, D.~K. Duvenaud, R.~Garnett, and S.~J. Roberts.
\newblock {Active Learning of Model Evidence Using Bayesian Quadrature}.
\newblock In \emph{Advances in Neural Information Processing Systems (NIPS)},
  pages 46--54, 2012.

\bibitem[Poincar\'{e}(1896)]{Poincare1896}
H.~Poincar\'{e}.
\newblock \emph{{Calcul des probabilit\'{e}s}}.
\newblock Paris, Gauthier-Villars, 1896.

\bibitem[Pr\"{u}her and \v{S}imandl(2016)]{Prueher2016}
J.~Pr\"{u}her and M.~\v{S}imandl.
\newblock {Bayesian Quadrature Variance in Sigma-point Filtering}.
\newblock In J.~Filipe, O.~Gusikhin, Madani K., and J.~Sasiadek, editors,
  \emph{Informatics in Control, Automation and Robotics, 12th International
  Conference, ICINCO 2015 Colmar, Alsace, France, 21-23 July, 2015 Revised
  Selected Papers}, volume 370 of \emph{Lecture Notes in Electrical
  Engineering}. Springer International Publishing, 2016.

\bibitem[Rasmussen and Ghahramani(2003)]{Rasmussen2003a}
C.~E. Rasmussen and Z.~Ghahramani.
\newblock {Bayesian Monte Carlo}.
\newblock In S.~Becker, S.~Thrun, and K.~Obermayer, editors, \emph{Advances in
  Neural Information Processing Systems 15}, number~1, pages 505--512. MIT
  Press, 2003.

\bibitem[Rasmussen and Williams(2006)]{Rasmussen2006}
C.~E. Rasmussen and C.~K. Williams.
\newblock \emph{{Gaussian Processes for Machine Learning}}.
\newblock The MIT Press, 2006.
\newblock ISBN 978-0-262-18253-9.

\bibitem[Ross et~al.(2015)Ross, Proulx, and Karpenko]{Ross2015}
I.~M. Ross, R.~J. Proulx, and M.~Karpenko.
\newblock Unscented guidance.
\newblock In \emph{2015 American Control Conference (ACC)}, pages 5605--5610,
  July 2015.
\newblock \doi{10.1109/ACC.2015.7172217}.

\bibitem[Sandblom and Svensson(2012)]{Sandblom2012}
F.~Sandblom and L.~Svensson.
\newblock {Moment Estimation Using a Marginalized Transform}.
\newblock \emph{IEEE Transactions on Signal Processing}, 60\penalty0
  (12):\penalty0 6138--6150, 2012.

\bibitem[S\"{a}rkk\"{a}(2013)]{Saerkkae2013}
S.~S\"{a}rkk\"{a}.
\newblock \emph{{Bayesian Filtering and Smoothing}}.
\newblock Cambridge University Press, New York, 2013.
\newblock ISBN 978-1-107-61928-9.

\bibitem[S\"{a}rkk\"{a} et~al.(2014)S\"{a}rkk\"{a}, Hartikainen, Svensson, and
  Sandblom]{Sarkka2014}
S.~S\"{a}rkk\"{a}, J.~Hartikainen, L.~Svensson, and F.~Sandblom.
\newblock {Gaussian Process Quadratures in Nonlinear Sigma-Point Filtering and
  Smoothing}.
\newblock In \emph{Information Fusion (FUSION), 2014 17th International
  Conference on}, pages 1--8, 2014.

\bibitem[{S{\"a}rkk{\"a}} et~al.(2016){S{\"a}rkk{\"a}}, {Hartikainen},
  {Svensson}, and {Sandblom}]{Saerkkae2016}
S.~{S{\"a}rkk{\"a}}, J.~{Hartikainen}, L.~{Svensson}, and F.~{Sandblom}.
\newblock {On the relation between Gaussian process quadratures and sigma-point
  methods}.
\newblock \emph{Journal of Advances in Information Fusion}, 11:\penalty0
  31--46, June 2016.
\newblock ISSN 1557-6418.
\newblock URL \url{http://arxiv.org/abs/1504.05994}.

\bibitem[Smith et~al.(1962)Smith, Schmidt, and McGee]{Smith1962}
G.~L. Smith, S.~F. Schmidt, and L.~A. McGee.
\newblock {Application of statistical filter theory to the optimal estimation
  of position and velocity on board a circumlunar vehicle}.
\newblock Technical report, NASA Ames Research Center: Technical Report R-135,
  1962.

\bibitem[Tronarp et~al.(2016)Tronarp, Hostettler, and
  S\"{a}rkk\"{a}]{Tronarp2016}
F.~Tronarp, R.~Hostettler, and S.~S\"{a}rkk\"{a}.
\newblock Sigma-point filtering for nonlinear systems with non-additive
  heavy-tailed noise.
\newblock In \emph{2016 19th International Conference on Information Fusion
  (FUSION)}, pages 1859--1866, July 2016.

\bibitem[Wasserman(2007)]{Wasserman2007}
L.~Wasserman.
\newblock \emph{{All of Nonparametric Statistics}}.
\newblock Springer, 2007.
\newblock ISBN 978-0387251455.

\bibitem[Wu et~al.(2006)Wu, Hu, Wu, and Hu]{Wu2006}
Y.~Wu, D.~Hu, M.~Wu, and X.~Hu.
\newblock {A Numerical-Integration Perspective on Gaussian Filters}.
\newblock \emph{IEEE Transactions on Signal Processing}, 54\penalty0
  (8):\penalty0 2910--2921, 2006.

\bibitem[Zangl and Steiner(2008)]{Zangl2008}
H.~Zangl and G.~Steiner.
\newblock {Optimal Design of Multiparameter Multisensor Systems}.
\newblock \emph{IEEE Transactions on Instrumentation and Measurement},
  57\penalty0 (7):\penalty0 1484--1491, July 2008.
\newblock ISSN 0018-9456.
\newblock \doi{10.1109/TIM.2008.917175}.

\end{thebibliography}
\end{document}